\documentclass[twocolumn,secnumarabic,nobibnotes,superscriptaddress,aps,nofootinbib]{revtex4-2}
\usepackage{color, xcolor, colortbl}
\usepackage{graphicx}
\usepackage[percent]{overpic}
\usepackage{amsmath,amssymb,amsthm,amsfonts,mathrsfs}
\usepackage{dsfont,pifont}
\usepackage{mathtools}
\usepackage[ruled,vlined,linesnumbered]{algorithm2e}
\usepackage{algpseudocode}
\usepackage{dcolumn}
\usepackage{epstopdf}
\usepackage{bm}
\usepackage{appendix}
\usepackage{multirow}
\usepackage{braket}
\usepackage[english]{babel}
\usepackage{hyperref}
\usepackage[capitalize]{cleveref}
\usepackage{aliascnt}
\usepackage[T1]{fontenc}
\usepackage{xpatch}
\usepackage{adjustbox}
\usepackage{xspace}
\usepackage[roman]{complexity}
\usepackage{xparse}
\usepackage{comment}
\usepackage{MnSymbol}

\newcommand{\cmark}{\textcolor{green}{\ding{51}}}
\newcommand{\xmark}{\textcolor{red}{\ding{55}}}

\renewcommand{\d}{\mathrm{d}}

\renewcommand{\Re}{\operatorname{Re}}

\newcommand{\Tr}{\operatorname{Tr}}

\newcommand{\I}{\mathrm{i}}

\newcommand{\mc}[1]{\mathcal{#1}}

\newcommand{\norm}[1]{\left\lVert#1\right\rVert}
\newcommand{\Or}{\mathcal{O}}

\newcommand{\ud}{\,\mathrm{d}}
\newcommand{\BB}{\mathbb{B}}

\newcommand{\RR}{\mathbb{R}}
\newcommand{\CC}{\mathbb{C}}

\newcommand{\cC}{\mathcal{C}}
\newcommand{\cD}{\mathcal{D}}
\renewcommand{\cL}{\mathcal{L}}
\providecommand{\cP}{}
\renewcommand{\cP}{\mathcal{P}}

\newcommand{\cH}{\mathcal{H}}

\newcommand{\cE}{\mathcal{E}}

\newcommand{\cO}{\mathcal{O}}
\newcommand{\cA}{\mathcal{A}}
\newcommand{\cB}{\mathcal{B}}
\newcommand{\cW}{\mathcal{W}}

\newcommand{\mdr}{\Delta_{\sigma}}
\newcommand{\wgt}{\Gamma_{\sigma}}
\newcommand{\sr}{\sigma^{1/2}}

\newtheorem{thm}{\protect\theoremname}
\theoremstyle{plain}

\theoremstyle{plain}
\newaliascnt{lem}{thm}
\newtheorem{lem}[lem]{\protect\lemmaname}
\aliascntresetthe{lem}
\theoremstyle{plain}
\newtheorem{rem}[thm]{\protect\remarkname}
\theoremstyle{plain}
\newtheorem*{lem*}{\protect\lemmaname}
\theoremstyle{plain}
\newaliascnt{prop}{thm}
\newtheorem{prop}[prop]{\protect\propositionname}
\aliascntresetthe{prop}
\theoremstyle{plain}

\newtheorem{defn}[thm]{\protect\definitionname}

\theoremstyle{definition}

\providecommand{\definitionname}{Definition}
\providecommand{\assumptionname}{Assumption}
\providecommand{\corollaryname}{Corollary}
\providecommand{\lemmaname}{Lemma}
\providecommand{\propositionname}{Proposition}
\providecommand{\remarkname}{Remark}
\providecommand{\examplename}{Example}
\providecommand{\theoremname}{Theorem}
\providecommand{\conjecturename}{Conjecture}
\providecommand{\assumptionname}{Assumption}

\Crefname{lem}{Lemma}{Lemmas}
\Crefname{prop}{Proposition}{Propositions}

\newcommand{\ketbra}[2]{\mathinner{|{#1}\rangle\!\langle{#2}|}}

\renewcommand{\rrangle}{\rangle\!\rangle}
\renewcommand{\llangle}{\langle\!\langle}

\newcommand{\prlsection}[1]{\vspace{1em}\paragraph*{#1---}}

\newcommand{\DeptMath}{Department of Mathematics, University of California, Berkeley, CA 94720, USA}
\newcommand{\LBLMath}{Applied Mathematics and Computational Research Division, Lawrence Berkeley National Laboratory, Berkeley, CA 94720, USA}
\newcommand{\Simons}{Simons Institute for the Theory of Computing, University of California, Berkeley, CA 94720, USA}

\begin{document}

\title{Accelerating quantum Gibbs sampling without quantum walks}

\author{Jiaqi Leng}
\thanks{These authors contributed equally to this work.}
\affiliation{\Simons}
\affiliation{\DeptMath}
\author{Jiaqing Jiang}
\thanks{These authors contributed equally to this work.}
\affiliation{\Simons}
\author{Lin Lin}
\thanks{linlin@math.berkeley.edu}
\affiliation{\DeptMath}
\affiliation{\LBLMath}

\date{\today}

\begin{abstract}
  Szegedy's quantum walk gives a generic quadratic speedup for reversible classical Markov chains, but extending this mechanism to quantum Gibbs sampling has remained challenging beyond special cases. We present a walk-free quantum algorithm for preparing purified Gibbs states with a quadratic improvement in spectral-gap dependence for a broad class of quantum Gibbs samplers that satisfy exact Kubo--Martin--Schwinger detailed balance. Our main structural result is an explicit factorization of the corresponding parent Hamiltonian into noncommutative first-order operators. This turns purified Gibbs-state preparation into a singular-value filtering problem and enables a quantum singular value transformation algorithm with quadratically improved gap dependence under standard coherent-access assumptions. The framework applies to several efficiently implementable Gibbs samplers beyond the Davies setting. We also introduce an auxiliary dissipative dynamics based on the same factorization, which can be used to generate warm starts in the doubled Hilbert space in metastable regimes.
\end{abstract}
\maketitle

\prlsection{Introduction}\label{sec:intro}

Gibbs sampling is a fundamental primitive in statistical physics~\cite{metropolis1953equation,glauber1963time,temme2011quantum}, computational chemistry~\cite{panagiotopoulos1987direct,ceperley1995path,FrenkelSmit2002}, statistics~\cite{geman1984stochastic,gelfand1990sampling,casella1992explaining}, machine learning~\cite{hinton2006reducing,hinton2010practical,neal2012bayesian}, and combinatorial sampling~\cite{kirkpatrick1983optimization,dyer1991random,levin2017markov}. The efficiency of Gibbs sampling is governed by the inverse of its spectral gap and is limited when this gap is small~\cite{chayes1987exponential,cesi1996two,schonmann1987second}.
Szegedy's quantum walk~\cite{Szegedy2004} provides a general mechanism to accelerate Gibbs sampling,  and more broadly, reversible Markov chains using quantum algorithms. This framework has become a central paradigm in quantum algorithm design, enabling quadratic speedups in applications such as quantum search~\cite{shenvi2003quantum,magniez2007search,buhrman2004quantum}. Specifically, for a reversible Markov chain with spectral gap $\Delta$, Szegedy's quantum walk lifted it to a unitary walk with phase gap of order $\sqrt{\Delta}$, yielding a quadratic improvement in the gap dependence under coherent access assumptions.

A natural question is whether a similar quadratic speedup can be achieved for quantum Gibbs sampling~\cite{chen2025efficient,ding2025efficient,jiang2024quantum,gilyen2024quantum,ding2025end,temme2011quantum}. Quantum Gibbs sampling is the quantum analogue of Markov chain Monte Carlo and is primarily designed for simulating finite-temperature quantum many-body systems, with applications also in quantum optimization~\cite{brandao2017quantum,van2017quantum} and quantum machine learning~\cite{wilde2025fundamentals,amin2018quantum}.
In the Lindblad framework, a quantum Gibbs sampler is generated by a Lindbladian $\cL$ whose stationary state is the Gibbs state of a many-body Hamiltonian. When $\cL$ satisfies the Kubo--Martin--Schwinger (KMS) detailed balance condition, it is the natural quantum generalization of a reversible Markov generator. 
For the special case of Davies generators~\cite{davies1974markovian}, Wocjan and Temme obtained a Szegedy-type quadratic speedup~\cite{WocjanTemme2023} relying on the special frequency-resolved structure of Davies generators; see also \cite{chen2023quantum} for Szegedy-type constructions for Lindbladians that approximate Davies generators.  
However, no comparable quantum walk construction is known for other classes of efficiently implementable quantum Gibbs samplers satisfying exact KMS detailed balance~\cite{chen2025efficient,ding2025efficient,gilyen2024quantum,chen2023efficient}. These exact KMS-detailed-balanced samplers include an additional coherent term, which makes it difficult to extend Szegedy-type techniques~\cite{chen2023efficient,ding2025efficient}. A recent work~\cite{di2026quantum} uses the detectability lemma and achieves a quadratic improvement in gap dependence for local commuting Hamiltonians. It remains an open problem whether one can still obtain a quadratic speedup using efficiently implementable quantum Gibbs samplers~\cite{chen2025efficient,ding2025efficient,gilyen2024quantum,chen2023efficient} for general non-commuting Hamiltonians. 

\begin{table*}
\caption{\label{tab:table3}
Comparison of existing methods to achieve quadratic speedups in spectral gap dependence for classical and quantum Markov chains.
``Efficient implementation'' indicates whether there exists polynomial-time quantum implementation of the method.
``Warm-start strategy'' indicates whether a practical method is provided to prepare a warm-start state. }
\begin{tabular*}{\textwidth}{@{\extracolsep{\fill}}l|llcc@{}}
 \hline
 \hline
 Method & Hamiltonian model & Gibbs sampler & Efficient      & Warm-start \\ 
        &                   &               & implementation & strategy \\ \hline
 Szegedy walk~\cite{Szegedy2004,somma2007quantum} & Classical & MCMC & \cmark & \cmark (annealing) \\
 Witten Laplacian factorization~\cite{leng2026operator} & Classical & MCMC & \cmark & \cmark (dissipative) \\ \hline
 Szegedy walk~\cite{WocjanTemme2023,chen2023quantum} & Quantum, general & (Approximate) Davies & \xmark & \xmark \\ 
 Detectability lemma~\cite{di2026quantum} & Quantum, local commuting & KMS & \cmark & \cmark (annealing) \\
 \textbf{This work} & Quantum, general & KMS & \cmark & \cmark (dissipative) \\
 \hline
 \hline
\end{tabular*}
\end{table*}

In this Letter, we answer this question affirmatively. Our key observation is that a broad class of KMS-detailed-balanced Lindbladians can be written in a unified Kossakowski-matrix form, encompassing Davies generators and the efficiently implementable quantum Gibbs samplers of Refs.~\cite{chen2025efficient,ding2025efficient}. 
For this class, the Lindbladian $\cL$ can be mapped, via a Gibbs-weighted similarity transformation, to a parent-Hamiltonian super-operator $\cH$. After vectorization, $\cH$ becomes a Hermitian operator $\widehat{\cH}$ on the doubled Hilbert space, and the canonical purified Gibbs state (also known as the thermofield double state, or TFD state) is a ground state of $\widehat{\cH}$. 
Moreover, we show that the Kossakowski structure immediately yields a factorization of the vectorized parent Hamiltonian:
\begin{equation}\label{eq:factorization}
\widehat{\cH} = \BB^\dagger \BB,
\end{equation}
where $\BB$ is a noncommutative analogue of the gradient operator that maps states from the doubled Hilbert space to an enlarged space.
Whenever the Gibbs state is the unique stationary state of the Lindbladian $\cL$, the kernel of $\BB$ is of rank 1 and is spanned by the purified Gibbs state.

This factorization is the mechanism behind our quadratic improvement in gap dependence. Instead of constructing a Szegedy-type operator, we work directly with $\BB$. Its smallest nonzero singular value is $\sqrt{\Delta}$, where $\Delta$ is the gap of the Lindbladian $\cL$. Quantum singular value transformation (QSVT)~\cite{GilyenSuLowEtAl2019} therefore yields a purified-Gibbs-state preparation algorithm with gap dependence $\Or(\Delta^{-1/2}\log(1/\epsilon))$, assuming coherent access to the first-order factors and a warm start, and hence gives a walk-free quadratic improvement. 
The output of our quantum algorithm is the purified Gibbs state, which can be used to estimate thermal observables via amplitude estimation~\cite{brassard2000quantum,rall2021faster,rall2023amplitude}.
Conceptually, this parallels the recent operator-level acceleration for classical continuous-space Gibbs sampling based on factorizing the Witten-Laplacian~\cite{leng2026operator}. 
Although the existence of related factorizations for general KMS-symmetric generators is known in an abstract setting~\cite{vernooij2023derivations}, our construction is explicit and, for several prominent classes of quantum Gibbs samplers, efficiently implementable from a concrete positive-matrix description. 
A comparison of our algorithm with existing methods is available in~\cref{tab:table3}.

Like Szegedy-type algorithms~\cite{Szegedy2004,WocjanTemme2023}, QSVT-based filtering methods typically require an initial state (i.e., warm start) with significant overlap with the target purified Gibbs state. Access to such a state is often treated as an assumption rather than derived from an explicit preparation procedure. We therefore propose an auxiliary dissipative dynamics in the doubled Hilbert space for warm-start preparation. This auxiliary dynamics is defined by a Lindbladian evolution closely related to the first-order factor $\BB$. While we do not expect it to mix rapidly for generic many-body systems, our numerical results show that, in the metastable examples studied here, a short-time evolution already produces an initial state with non-negligible overlap with the purified Gibbs state.

\prlsection{Quantum Gibbs samplers and KMS detailed balance}
Given an $n$-qubit Hamiltonian $H$ and an inverse temperature $\beta > 0$, the Gibbs state is $\sigma = Z_\beta^{-1} e^{-\beta H}$, where $Z_\beta = \Tr[e^{-\beta H}]$. We can write the Gibbs state $\sigma = \sum_j \sigma_j \ketbra{\psi_j}{\psi_j}$, where $(E_j,\ket{\psi_j})_j$ are the eigen-pairs of the Hamiltonian $H$, and $\sigma_j = e^{-\beta E_j}/Z_\beta$.
Define the KMS inner product $\langle X,Y\rangle_{\sigma} \coloneqq \Tr(X^\dagger \sigma^{1/2} Y \sigma^{1/2})$. If a Lindbladian $\cL$ satisfies the KMS detailed balance condition with respect to $\sigma$, i.e., $\langle X, \cL^{\dag}(Y)\rangle_{\sigma} = \langle \cL^{\dag}(X), Y\rangle_{\sigma}$, then $\sigma$ is a stationary state of $\cL$, i.e., $\cL[\sigma]=0$.
Therefore, KMS-detailed-balanced Lindbladians are natural choices of quantum Gibbs samplers.
We consider KMS-detailed-balanced Lindbladians of the form:
\begin{align}\label{eq:kossakowski-form}
\cL[\cdot]=-i[G,\cdot]+\sum_{\nu,\nu'}\alpha_{\nu,\nu'}\left(A_\nu\cdot A_{\nu'}^\dagger-\frac{1}{2}\{A_{\nu'}^\dagger A_\nu,\cdot\}\right),
\end{align}
where $\nu, \nu' \in B_H$ range over the Bohr frequencies of $H$, $A_\nu$ denotes the corresponding Bohr-frequency component of one or more coupling operators, and $\mathbf{C} = (\alpha_{\nu,\nu'})_{\nu,\nu'\in B_H}\succeq 0$ is the \textit{Kossakowski} matrix. The KMS detailed balance condition uniquely determines the coherent term $G$ once $\mathbf{C}$ is specified~\cite{ding2025efficient}.
This single template~\cref{eq:kossakowski-form} covers important classes of KMS-detailed-balanced Lindbladians, including Davies generators and the efficiently implementable quantum Gibbs samplers recently proposed in~\cite{chen2023efficient,ding2025efficient}; they differ only in how $\mathbf{C}$ decomposes into positive pieces.
For details, see~\cref{append:factorization-KMS-lindbladian}.

\prlsection{Szegedy's quantum walk and obstruction to quantum Gibbs sampling} 
For classical Markov chains, Szegedy's quantum walk is built from the following mechanism. 
Let $P$ be an ergodic reversible transition matrix where $P_{xy}$ denotes the probability of transiting from $x$ to $y$. Denote the invariant probability distribution of $P$ as $\pi$. By a similarity transform, $P$ is mapped to a real symmetric discriminant matrix $D$ with the same eigenvalues, where $D:= \operatorname{diag}(\pi)^{\frac{1}{2}} P\, \operatorname{diag}(\pi)^{-\frac{1}{2}}.$  
Szegedy's~\cite{Szegedy2004} key step is to represent this discriminant as $D=T^{\dag}RT$ (where $T$ is an isometry and $R$ is a reflection), and define a walk unitary $\cW = R(2\Pi -I)$, where $\Pi$ denotes the projector onto the image of $T$.
The walk unitary $\cW$ maps each eigenvalue $\lambda_i$ of $D$ to phases $e^{\pm i\arccos(\lambda_i)}$, so a spectral gap $\Delta=1-\lambda_2$ is converted into a phase gap $\arccos(1-\Delta)\approx\sqrt{2\Delta}$.
For quantum Gibbs samplers, one would like an analogous construction with $P$ replaced by a quantum channel.

To achieve a quadratic speedup for Markov chains, we need an efficient implementation of the isometry $T$.
In classical settings where the transition probabilities are computable, such an isometry can be constructed via coherent arithmetic.
For a general KMS-detailed-balanced Lindbladian, no analogous efficiently implementable isometry is known. 
Ref.~\cite{WocjanTemme2023} circumvents this difficulty for Davies generators, where the frequency-resolved structure provides the missing ingredients. Even in that case, however, implementing $T$ requires resolving all Hamiltonian eigenvalues under the \textit{rounding promise}~\cite{WocjanTemme2023}, an assumption widely regarded as highly unphysical. 
For the broader classes of quantum Gibbs samplers considered in~\cite{chen2025efficient,ding2025efficient}, no comparable Szegedy-type isometry is known, and correspondingly no efficient quantum implementation is available.

\prlsection{Sum-of-squares factorization of Lindbladians in Kossakowski form}

If a Lindbladian $\cL$ satisfies the KMS detailed balance, we define its parent Hamiltonian:
\begin{equation}
  \cH[\cdot] = -\sigma^{-1/4}\cL[\sigma^{1/4}\cdot\sigma^{1/4}]\sigma^{-1/4}, \label{eq:cH}
\end{equation}
which is a Hermitian positive-semidefinite operator with the same spectrum as $-\cL$, and $\cH[\sigma^{1/2}]=0$. 

Our algorithm bypasses the missing Szegedy isometry by directly factorizing the parent Hamiltonian. First, we introduce the following symmetric bilinear form associated with $\cL$:
\begin{equation}
\cE(X,Y) \coloneqq -\langle X,\cL^{\dagger}(Y)\rangle_{\sigma}.
\end{equation}
Note that $\cE(X,X)$ is known as the \textit{Dirichlet form} of $\cL$.
For Lindbladian generators of the form~\cref{eq:kossakowski-form}, we show that a positive decomposition of the Kossakowski matrix yields $\cE$ as a weighted sum of squared commutators,
\begin{equation}
\cE(X,Y)=\int_{-\infty}^{\infty} g(t)\sum_{j=1}^J \langle [B_{j,t},X],[B_{j,t},Y]\rangle_{\sigma}\,\ud t,
\label{eq:bilinear-sos}
\end{equation}
where $g(t)=1/[\beta\cosh(2\pi t/\beta)]$, $J$ is the rank of $\mathbf{C}$, $B_{j,t}=e^{i Ht}B_je^{-i Ht}$, and each $B_{j}$ is a fixed linear combination of $\{A_\nu\}$ determined by a positive decomposition of $\mathbf{C}$; see~\cref{append:symmetric-form-kossakowski}.
The factorized Dirichlet form~\cref{eq:bilinear-sos} was previously known only in restricted settings~\cite{chen2025quantum,rouze2025efficient,slezak2026polynomial}. It is the quantum analogue of writing a reversible diffusion generator as an integral of squared derivatives, with dissipation resolved into noncommutative gradients labeled by $j$ and by the time parameter $t$.

This representation immediately lifts to the parent Hamiltonian. For each $B_j$ as discussed above, we define
\begin{equation}\label{eqn:B_sharp_B_flat}
B_{j}^{\sharp}=\sigma^{1/4}B_j\sigma^{-1/4},\quad
B_{j}^{\flat}=\sigma^{-1/4}B_j\sigma^{1/4}.
\end{equation}
We write $B^*_{j,t} := e^{iHt} B^*_j e^{-iHt}$ for $* \in \{\sharp,\flat\}$, and define $\cB_{j,t}[X] \coloneqq B_{j,t}^{\sharp}X-XB_{j,t}^{\flat}$. From~\cref{eq:bilinear-sos}, we show that 
\begin{equation} 
\cH[\cdot]=\int_{-\infty}^{\infty} g(t)\sum_j \cB_{j,t}^{\dagger}\circ\cB_{j,t}[\cdot]\,\ud t.
\label{eq:parent-sos}
\end{equation}
This is the noncommutative counterpart of the factorization ``Laplacian = gradient$^{\dagger}$gradient''.
Here, each $\cB_{j,t}$ can be regarded as a first-order (i.e., derivative) operator and $\cB_{j,t}[\sigma^{1/2}]=0$ for every $j,t$.
This representation supplies the structural ingredient that the Szegedy approach lacks in the quantum setting. 

After vectorization, we map the parent Hamiltonian $\cH$ to a positive semidefinite Hamiltonian $\widehat{\cH}$ on the doubled Hilbert space, whose ground state is the purified Gibbs state $|\sigma^{1/2}\rangle\!\rangle = \sum_j \sqrt{\sigma_j}\ket{\psi_j}\ket{\psi_j}$; under ergodicity this ground state is unique. Under this map, each first-order factor becomes
\begin{equation}\label{eqn:first-order-factor-vec}
\widehat{\cB}_{j,t}=I\otimes B_{j,t}^{\sharp}-(B_{j,t}^{\flat})^{\top}\otimes I,
\end{equation}
so that
\begin{equation}
\widehat{\cH}=\int_{-\infty}^{\infty} g(t)\sum_j \widehat{\cB}_{j,t}^{\dagger}\widehat{\cB}_{j,t}\,\ud t.
\label{eq:vectorized-factorization}
\end{equation}
The target state $|\sigma^{1/2}\rangle\!\rangle$ is annihilated by the whole family $\{\widehat{\cB}_{j,t}\}$. Conversely, since~\cref{eq:vectorized-factorization} expresses $\widehat{\cH}$ as a weighted sum of positive operators, every ground state of $\widehat{\cH}$ lies in the joint kernel of all $\widehat{\cB}_{j,t}$. Hence, when $\cL$ is ergodic so that $\widehat{\cH}$ has a unique ground state, this joint kernel is one-dimensional and is spanned by $|\sigma^{1/2}\rrangle$.

\prlsection{Sum-of-squares factorization for DLL Gibbs sampler}
We briefly discuss the factorization of the DLL sampler due to its simplicity. For the factorization of the Davies generator and the CKG sampler, see~\Cref{append:classes-gibbs-samplers}.
With a single coupling operator, the DLL Lindbladian admits a single jump operator:
\begin{align}
    L_{\rm DLL} = \int^\infty_{-\infty} f(t) e^{iHt}Ae^{-iHt}~\d t = \sum_{\nu} q(\nu)e^{-\beta \nu/4}A_\nu,
\end{align}
where $f$ is the inverse Fourier transform of $q(\nu) e^{-\beta\nu/4}$. 
The weighting function $q(\nu) \colon \RR \to \CC$ satisfies $q(-\nu) = \overline{q(\nu)}$.
In this case, the Kossakowski matrix is of rank $J=1$: $\mathbf{C}_{\rm DLL} = vv^\dagger$, where $v$ is a column vector and $v_\nu = q(\nu)e^{-\beta\nu/4}$.
We can write the factorization of the parent Hamiltonian of $\cL_{\rm DLL}$ in the form of~\cref{eq:vectorized-factorization}, with
\begin{align}
    B^\sharp = \sum_{\nu}q(\nu) e^{-\beta \nu/4} A_{\nu},\quad B^\flat = \sum_{\nu}q(\nu) e^{\beta \nu/4} A_{\nu}.
\end{align}
Note that $B^\sharp$ coincides with the jump operator $L_{\rm DLL}$.
More generally, the rank $J$ is the number of coupling operators used in the DLL sampler. With appropriate choices of the weighting function $q$, $B^\sharp$ and $B^\flat$ can be efficiently implemented using numerical quadratures as in~\cite[Section~3.3]{ding2025efficient}.

\prlsection{Quantum algorithms and complexity analysis}
We can now construct a QSVT-based quantum algorithm to prepare the purified Gibbs state $|\sigma^{1/2}\rrangle$ from \cref{eq:vectorized-factorization}. 
We truncate the time integral to $-T \le t \le T$ and discretize it on a grid $\{t_k = -T + kh\}^{N-1}_{k=0}$, where $h = 2T/N$. 
This yields a numerical approximation of the form
\begin{equation}
\widehat{\cH}\approx \BB^{\dagger}\BB,
\qquad
\BB_{j,k}=\sqrt{hg(t_k)}\,\widehat{\cB}_{j,t_k},
\label{eq:discrete-B}
\end{equation}
where $\BB$ is a rectangular operator whose smallest right singular vector is approximately $|\sr\rrangle$ up to discretization error.
A quadrature analysis shows that the error can be made at most $\epsilon$ by choosing
\begin{equation}
T=\Or\!\left(\beta\log\frac{J}{\epsilon}\right),
\quad
N=\Or\!\left(\log^2\frac{J}{\epsilon}+\beta\norm{H}\log\frac{J}{\epsilon}\right),
\label{eq:quadrature}
\end{equation}
where $J$ is the number of jump-operator channels in the sum-of-squares decomposition as in~\cref{eq:vectorized-factorization}, and $N$ is the number of  quadrature points.  
The linear dependence on $\norm{H}$ comes from the spectral width of the vectorized time-evolution generator $K = -H^\top \otimes I + I\otimes H$, see~\cref{append:quadrature-error-analysis}. 

\begin{figure}[!ht]
\centering
\makebox[\columnwidth]{\hspace*{0.22\columnwidth}\textbf{(a)}\hfill\textbf{(b)}\hspace*{0.22\columnwidth}}
\vspace{0.2em}
\begin{overpic}[width=\columnwidth,trim={3cm 14cm 3cm 14cm},clip]{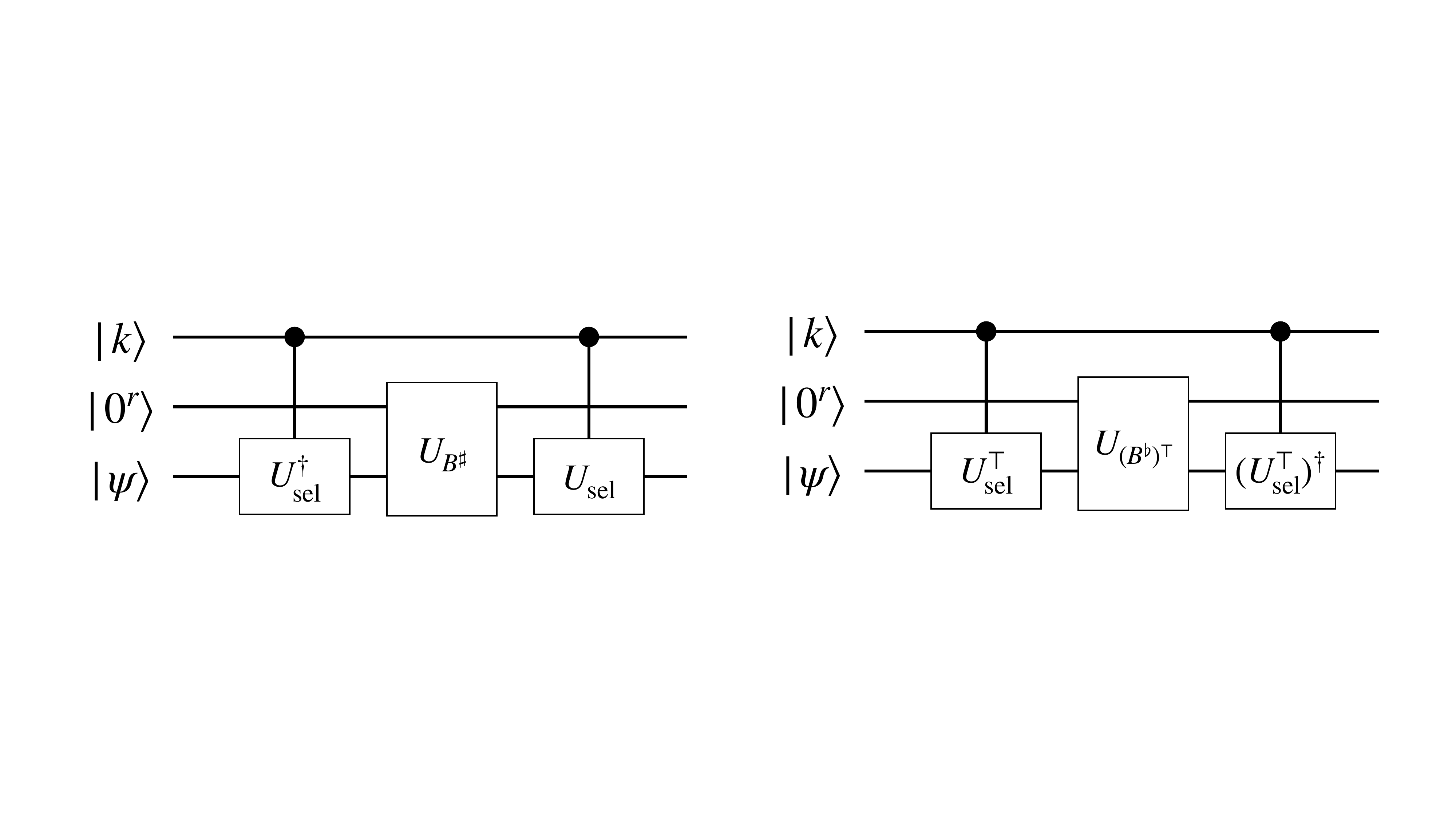}
\put(50,95){\color{white}\rule{1.2cm}{0.45cm}}
\end{overpic}

\vspace{0.8em}
\textbf{(c)}\par
\vspace{0.2em}
\begin{overpic}[width=\columnwidth,trim={8cm 12cm 8cm 9cm},clip]{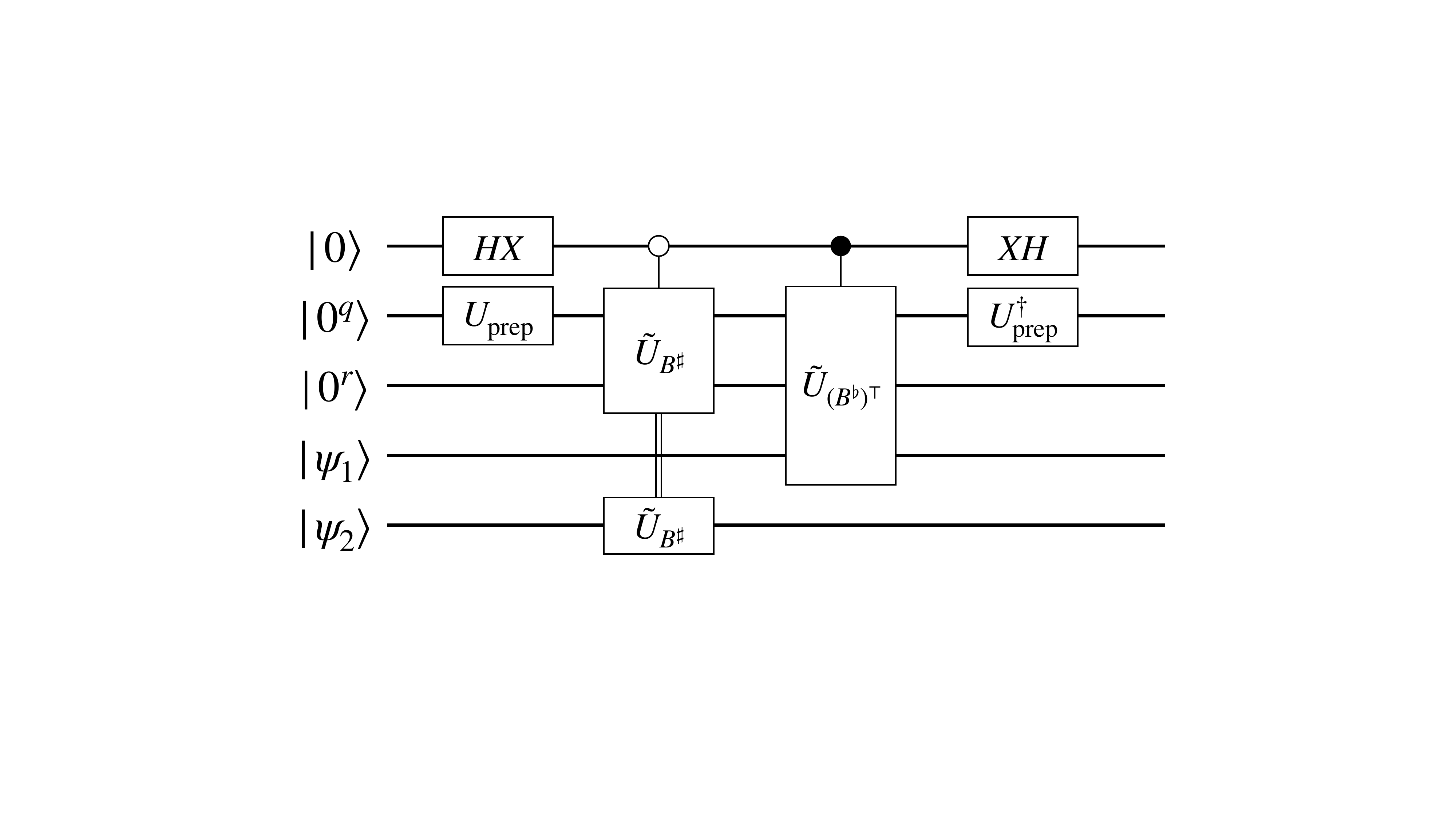}
\end{overpic}
\caption{Circuit schematics for the block-encoding of the discretized first-order operator $\BB$. 
Panels (a) and (b) show the dressed block-encodings of $\tilde{U}_{B^\sharp}$ and $\tilde{U}_{(B^\flat)^\top}$. 
Panel (c) shows the full circuit assembling these ingredients with the quadrature-weight preparation oracle $U_{\rm prep}$ and a branch ancilla to realize the linear combination corresponding to $I\otimes B^\sharp_{j,t_k}-(B^\flat_{j,t_k})^\top\otimes I$.}
\label{fig:block-encoding-B}
\end{figure}

Recall that $\widehat{\cH}$ has the same spectrum as $-\cL$. Since $\BB^{\dagger}\BB$ approximates $\widehat{\cH}$ and the nonzero eigenvalues of $\BB^{\dagger}\BB$ are the squares of the nonzero singular values of $\BB$, the smallest nonzero singular value of $\BB$ is $\sqrt{\Delta}$ up to discretization error, where $\Delta$ is the Lindbladian gap. 
Therefore, QSVT yields an approximate projector onto the ground right-singular subspace of $\BB$ using $\widetilde{\cO}(1/\sqrt{\Delta})$ queries to a block-encoding of $\BB$, thereby mapping a warm-start state to the desired purified Gibbs state with constant success probability.
The block-encoding of $\BB$ can be built from block-encodings of $B_j^\sharp$ and $B_j^\flat$, controlled time evolution under $H$, and a state-preparation oracle $U_{\rm prep}$ for the quadrature weights $g(t_k)$; see~\cref{fig:block-encoding-B}.
For efficiently implementable samplers such as CKG~\cite{chen2023efficient} and DLL~\cite{ding2025efficient}, the required block-encodings of $B_j^{\sharp}$ and $B_j^{\flat}$ can be obtained via weighted operator Fourier transforms at a cost comparable to implementing the samplers themselves; see~\Cref{append:classes-gibbs-samplers} for details.
The controlled evolution of a general local Hamiltonian $H$ for time $t$ can be simulated with gate complexity linear in $t$ under a block-encoding input model for $H$.
Since the kernel function $g(t)$ is essentially supported on an interval of length $\mathcal{O}(\beta)$, the state $\ket{g}$ can be prepared using quantum signal processing with $\poly\log(1/\varepsilon)$ gates and constant ancillas, where $\varepsilon$ is the precision parameter~\cite{mcardle2022quantum}.
For simplicity, we assume black-box access to block-encodings of $B^\sharp$ and $B^\flat$ (denoted as $U_{B^\sharp}$ and $U_{B^\flat}$) and to the state-preparation oracle $U_{\rm prep}$ that prepares $\ket{g}$.
\begin{thm}\label{thm:main-text}
Suppose that the Lindbladian $\cL$ admits a factorization with $J$ jump operators as in~\cref{eq:vectorized-factorization}, and we have access to a warm-start state $|\phi\rangle$ such that $|\langle\phi|\sigma^{1/2}\rangle\!\rangle|^2=\Omega(1)$.
Then, we can prepare an $\epsilon$-approximation of $|\sr\rrangle$ using
\begin{equation}
\Or\!\left(\sqrt{\frac{J}{\Delta}}\log\frac{1}{\epsilon}\right)
\label{eq:main-complexity}
\end{equation}
queries to $U_{B^\sharp}$, $U_{(B^\flat)^\top}$, $U_{\rm prep}$ (and their inverses), and $\Or(1)$ copies of the warm start.
The maximal Hamiltonian evolution time is $\Or(\beta\log(J/\Delta\epsilon))$.
\end{thm}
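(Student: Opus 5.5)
The plan has four steps: build a block-encoding of the discretized factor $\BB$ from \cref{eq:discrete-B}, control the quadrature error so that $\BB^\dagger\BB$ is close to $\widehat{\cH}$, apply a QSVT singular-value filter that suppresses everything except the kernel of $\BB$, and finish with amplitude amplification.

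\emph{Block-encoding $\BB$.} We view $\BB$ as a column operator $\sum_{j,k}\ket{j,k}\otimes\BB_{j,k}$, built with the circuit of \cref{fig:block-encoding-B}.

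\begin{enumerate}
\item Apply $U_{\rm prep}$ to a register $\ket{k}$ to load the amplitudes $\sqrt{h g(t_k)}$.
\item Prepare a uniform superposition over $j\in\{1,\dots,J\}$.
\item Controlled on $(j,k)$, conjugate the block-encodings $U_{B^\sharp}$ and $U_{(B^\flat)^\top}$ by controlled time evolution $e^{\pm iHt_k}$ on the appropriate tensor factor. This gives block-encodings of $I\otimes B^\sharp_{j,t_k}$ and $(B^\flat_{j,t_k})^\top\otimes I$.
\item Combine the two with a branch ancilla (LCU) to form $\widehat{\cB}_{j,t_k}$.
\end{enumerate}

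Since $\int g=1$, the quadrature weights sum to $O(1)$. The normalization of the resulting block-encoding is therefore $\alpha=O(\sqrt{J}\,\max_j\|B_j^{\sharp,\flat}\|)=O(\sqrt{J})$, where the factor $\sqrt{J}$ comes from the uniform amplitude over $j$. Each use costs $O(1)$ queries to the oracles. The evolution times are at most $T$.

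\emph{Discretization.} Using \cref{eq:quadrature} with target accuracy $\delta$, we get $\|\BB^\dagger\BB-\widehat{\cH}\|\le\delta$. We choose $\delta=c\,\Delta\,\epsilon$. This is the source of the stated maximal evolution time $T=O(\beta\log(J/\Delta\epsilon))$.

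By Weyl's inequality, $\BB^\dagger\BB$ has one eigenvalue in $[0,\delta]$ and all others at least $\Delta-\delta$. By Davis--Kahan, its lowest eigenvector is $O(\delta/\Delta)=O(\epsilon)$-close to $|\sr\rrangle$. Equivalently, $\BB$ has one right singular value at most $\sqrt\delta$, and the rest are at least $\sqrt{\Delta/2}$. After rescaling by $\alpha$, the gap between these two groups is of order $\sqrt{\Delta/J}$.

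\emph{Filtering.} Apply QSVT with an even polynomial $p$ of degree $O(\sqrt{J/\Delta}\log(1/\epsilon))$ satisfying:
\begin{itemize}
\item $|p(x)|\le 1$ on $[-1,1]$;
\item $p(x)\approx 1$ for $|x|\le\sqrt{\delta}/\alpha$;
\item $|p(x)|\le\epsilon$ for $|x|\ge\sqrt{\Delta/2}/\alpha$.
\end{itemize}
Such a $p$ can be taken as a standard polynomial approximation of a shifted Gaussian or rectangle function. For this window to be consistent we need $\sqrt\delta\ll\sqrt\Delta$, which our choice of $\delta$ guarantees.

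Applying $p(\BB)$ on the right singular space to $|\phi\rangle$ produces, up to $O(\epsilon)$ error, the projection of $|\phi\rangle$ onto the near-kernel vector. By the overlap assumption this succeeds with probability $\Omega(1)$. Fixed-point or oblivious amplitude amplification then uses $O(1)$ rounds and $O(1)$ copies of $|\phi\rangle$, giving the claimed total query count $O(\sqrt{J/\Delta}\log(1/\epsilon))$.

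\emph{Main obstacle.} The hardest step is the error bookkeeping in the discretization, for two reasons. First, the quadrature error must be driven below the gap, on the order of $\Delta\epsilon$, for the perturbation bound to hold. Second, the normalization $\alpha$ must not grow with $N$. The first is handled by the choice of $\delta$ above. For the second, I would rely on $\sum_k h g(t_k)\approx\int g=1$, so that $\alpha$ stays $O(\sqrt{J})$ independently of $N$ and only $T$ enters logarithmically.
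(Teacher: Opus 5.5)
Your proposal is correct and follows essentially the same route as the paper. The paper likewise block-encodes $\BB$ with normalization $2\sqrt{JC}=O(\sqrt{J})$ via LCU and conjugation by controlled evolutions. It sets the quadrature accuracy to $\eta=\epsilon\Delta/2$, so that Weyl plus Davis--Kahan give an $O(\epsilon)$-accurate near-kernel vector whose singular value is separated from the rest by $\sqrt{\Delta/2}$. It then applies QSVT singular-value thresholding to the warm start.

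Two minor slips, neither affecting the result. First, $\int g=1/2$, not $1$. Second, oblivious amplitude amplification does not apply here, because the success amplitude depends on $|\phi\rangle$. This is moot: with $\Omega(1)$ success probability, plain repetition already uses $O(1)$ expected copies of the warm start.
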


A detailed statement and proof of~\cref{thm:main-text} are given in~\cref{append:circuit-implementation-cost}.
We also give a dissipative protocol for preparing warm starts via short-time Lindbladian simulation.

\prlsection{Preparing warm starts in the doubled Hilbert space} 

Generating the warm-start state $\ket{\phi}$ is difficult even for quantum computers: in the low-temperature limit $\beta \to \infty$, preparing a state with constant overlap with $|\sr\rrangle$ would in general yield constant overlap with a ground-state purification of $H$, so one should not expect an efficient generic procedure.
In this work, for a general Hamiltonian $H$, we propose an auxiliary dynamics in the doubled Hilbert space for which $|\sr\rrangle\llangle\sr|$ is a stationary state. 
While the global mixing time of this dynamics can be very long, in systems with separated timescales a short-time evolution may still produce a state with non-negligible overlap with $|\sr\rrangle\llangle\sr|$.
This auxiliary dynamics should therefore be viewed as evidence for the existence of useful warm starts, rather than as a complete replacement for the coherent input model required by QSVT.

The auxiliary dynamics is inspired by the first-order factors of the parent Hamiltonian~\cref{eqn:first-order-factor-vec}. 
For any $j,t$, the operators $\widehat{\cB}_{j,t}$ annihilate $|\sr\rrangle$. 
In principle, one can define a Lindbladian on the doubled Hilbert space that uses all $\widehat{\cB}_{j,t}$ as jump operators, so that $|\sr\rrangle\llangle\sr|$ is stationary. 
However, simulating the full family $\widehat{\cB}_{j,t}$ can be expensive in practice. Motivated by the numerical evidence below, we therefore consider the restricted choice of jump operators $\{\widehat{\cB}_{j,0}\}$. The resulting auxiliary Lindbladian is given by (we write $\widehat{B}_j = \widehat{\cB}_{j,0}$ for simplicity)
\begin{align}
  \cL_{\rm aux}[\cdot] = \sum_j \widehat{B}_{j} \cdot \widehat{B}^\dagger_{j} - \frac{1}{2}\{\widehat{B}^\dagger_{j} \widehat{B}_{j}, \cdot\}.
\end{align}

In the transverse-field Ising examples studied numerically below, this auxiliary dynamics converges at a rate comparable to that of the primitive DLL sampler in the low-temperature regime ($\beta \gg 1$). Surprisingly, the convergence becomes slow in the high-temperature regime ($\beta \ll 1$). Further analysis shows that, in the $\beta = 0$ limit, $\cL_{\rm aux}$ admits exponentially many stationary states: more precisely, each invariant Bell-diagonal support sector contains a stationary state; see~\cref{prop:auxiliary-lindbladian-beta-zero}. To remove these redundant stationary states in the high-temperature regime, we introduce a ``unitary dressing'' mechanism, which replaces the jump operator $\widehat{B}_{j}$ by $\mc{U}_j\widehat{\cB}_j$ for some unitary $\mc{U}_j$ on the doubled Hilbert space. This replacement does not change the stationary states, but it can break the symmetry responsible for the extra Bell-diagonal stationary states and thereby improve the mixing. At $\beta = 0$, we show that a careful choice of $\mc{U}_j$ using single Pauli operators can provably remove all nontrivial stationary states inside the Bell-diagonal sector. Details of the unitary dressing mechanism and numerical results can be found in~\cref{append:auxiliary-dynamics}.

Consider, for example, the ferromagnetic regime of the transverse-field Ising model (TFIM), 
\begin{equation}
H = -J\sum_{\langle i,j\rangle} Z_i Z_j - h \sum_i X_i, \qquad J,h>0. \label{eq:TFIM}
\end{equation}
We simulate the auxiliary Lindbladian with unitary dressing for a $3$-qubit TFIM model ($J = 1$, $h=0.1$) and track the overlap with the purified Gibbs state (\cref{fig:warm-start-dynamics}). 
Even in this minimal model, we observe two distinct timescales. On the one hand, the full mixing time grows rapidly as $\beta$ increases, consistent with slower low-temperature equilibration. On the other hand, the overlap with the target state rises quickly and is already at least $0.2$ by time $t=1$ across the tested range of $\beta$. The overlap stabilizes near $0.18$ as $\beta$ increases further to $10$.

We expect the same metastability mechanism to persist for larger system sizes. 
For instance, for a two-dimensional TFIM with $0<h\ll J$ and $\beta\gg1$, the low-temperature Gibbs weight is concentrated on two symmetry-related ferromagnetic sectors, while transitions between them are exponentially suppressed in system size~\cite{gamarnik2024slow}. Global thermalization is therefore slow, whereas equilibration within a single sector may still occur on a shorter timescale. Even if the auxiliary dynamics first relaxes to a state close to the sector-restricted purified Gibbs state in one of these two sectors, its overlap with the full purified Gibbs state is still approximately the Gibbs weight of that sector, which is about $0.5$. 
More generally, if the initial state has support on a metastable sector whose Gibbs weight is $\Omega(1)$, or at least $\Omega(1/\poly(n))$, then a short-time evolution under the auxiliary dissipative dynamics can already generate a warm start with the same order of target overlap even when the full mixing time is exponential in both $n$ and $\beta$.

\begin{figure}[!ht]
  \centering
    \includegraphics[width=\linewidth,trim={1cm 1cm 1cm 1cm},clip]{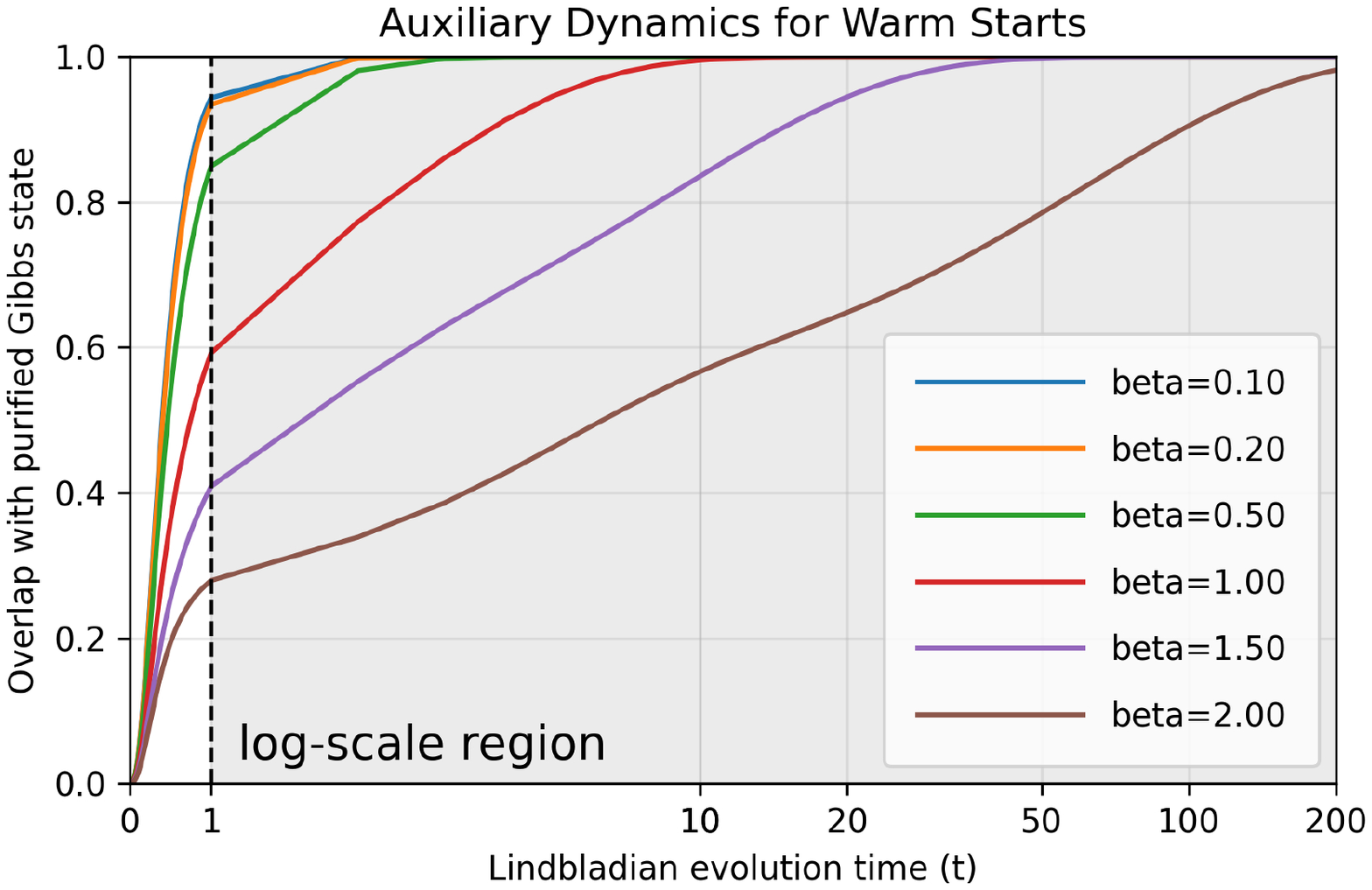}
    \caption{Numerical simulation of the auxiliary dissipative dynamics with ``unitary dressing'' for a $3$-qubit transverse-field Ising model. We plot the overlap $\Tr[\rho(t)|\sigma^{1/2}\rrangle \llangle\sigma^{1/2}|]$ as a function of the evolution time $t$, for several inverse temperatures $\beta$. As $\beta$ increases, the global mixing time grows substantially, but the overlap with the target purified Gibbs state already exceeds $0.2$ at $t=1$ for all values of $\beta$ shown. Therefore, the auxiliary dynamics can generate states with non-negligible target overlap on a timescale much shorter than the global mixing time.}
    \label{fig:warm-start-dynamics}
\end{figure}

\prlsection{Discussion}
We have shown that quantum Gibbs sampling admits a quadratic improvement in gap dependence without relying on the quantum-walk mechanism. For a broad class of KMS-detailed-balanced Lindbladians in Kossakowski form, the factorization $\widehat{\cH}=\BB^\dagger\BB$ turns purified Gibbs-state preparation into a singular-value problem for a noncommutative first-order operator $\BB$, leading to a $1/\sqrt{\Delta}$ dependence instead of $1/\Delta$. 
This extends the operator-level acceleration of classical continuous-space Gibbs sampling via the Witten-Laplacian factorization~\cite{leng2026operator} to the quantum setting, but with an additional time parameter $t$ in the first-order factors $\cB_{j,t}$.
We note that factorization-based gap amplification has appeared in quantum simulation, especially for (near) frustration-free Hamiltonians~\cite{low2017hamiltonian,low2025fast,somma2013spectral,king2026quantum}; however, its application to quantum Gibbs sampling requires detailed analysis of the parent Hamiltonian, which is one of the key technical contributions of this work.

Purified Gibbs states are also of independent interest in many-body physics and quantum gravity~\cite{brown2023quantum,nezami2021quantum,schuster2022many,jafferis2022traversable,maldacena2018eternal,zhu2020generation,wu2019variational}. For specific models such as the Sachdev-Ye-Kitaev (SYK) model, the main bottleneck is the efficient preparation of warm-start states. The auxiliary Lindblad dynamics introduced here suggests a possible route to such warm starts. Understanding when this mechanism is efficient, and how it can be improved in the doubled Hilbert space, is an interesting direction for future work.

\prlsection{Acknowledgements} 
This work is partially supported by the Simons Quantum Postdoctoral Fellowship (J.L., J.J.), by a Simons Investigator Award in Mathematics through Grant No. 825053 (J.L., L.L.), and by the U.S. Department of Energy, Office of Science, Accelerated Research in Quantum Computing Centers, Quantum Utility through Advanced Computational Quantum Algorithms, grant no. DE-SC0025572 (L.L.). 

\nocite{*}

\bibliographystyle{apsrev4-2}
\bibliography{refs}

\newpage
\widetext

\newpage

 \begin{center}
    \textbf{SUPPLEMENTAL MATERIALS}
 \end{center}

\appendix
\renewcommand{\thesubsection}{\thesection.\arabic{subsection}}
\renewcommand{\thesubsubsection}{\thesubsection.\arabic{subsubsection}}
\crefalias{section}{appendix}
\crefalias{subsection}{appendix}
\crefalias{subsubsection}{appendix}
\Crefname{appendix}{Appendix}{Appendices}
\crefname{appendix}{appendix}{appendices}
\section{Sum-of-squares factorization of KMS DB Lindbladians}\label{append:factorization-KMS-lindbladian}

\paragraph*{Preliminaries.}
Given an $n$-qubit Hamiltonian $H$ and an inverse temperature $\beta > 0$, the Gibbs state is defined by:
\begin{align}
    \sigma = \frac{1}{Z_\beta} \exp(-\beta H), \quad Z_\beta = \Tr[\exp(-\beta H)].
\end{align}
With the Gibbs state $\sigma$, we define the \emph{modular} operator and the \emph{weighting} operator, respectively:
\begin{align}
    \mdr(X) = \sigma X \sigma^{-1},\quad \wgt(X) = \sigma^{1/2} X \sigma^{1/2}.
\end{align}

Given two matrices $A,B$, we denote the \textit{Hilbert--Schmidt} (HS) inner product as $\langle A, B\rangle = \Tr(A^\dagger B)$.
For $0 \le s \le 1$, we can define an $s$-parametrized family of inner products: 
\begin{align}
    \langle A,B\rangle_{\sigma, s} \coloneqq \Tr(A^\dagger \sigma^{1-s} B \sigma^{s}),
\end{align}
where $\langle \cdot,\cdot\rangle_{\sigma,1}$ and $\langle \cdot,\cdot\rangle_{\sigma,1/2}$ are called the \emph{Gelfand--Naimark--Segal (GNS)} and \emph{Kubo--Martin--Schwinger (KMS)} inner products, respectively.
For simplicity, we will write the KMS inner product as $\langle \cdot,\cdot\rangle_{\sigma}$ when there is no ambiguity.

A Lindblad master equation is of the form:
\begin{align}\label{eqn:lindblad}
    \dot{\rho} = \cL[\rho],\quad \cL[\cdot] = -i[G, \cdot] + \sum_{a\in \cA} L_a \cdot L^\dagger_a - \frac{1}{2}\{L^\dagger_aL_a,\cdot\}, 
\end{align}
where $\rho(t)$ for $t \ge 0$ is a time-dependent density operator, and the operator $\mathcal{L}$ is called a \emph{Lindbladian}. The coherent part $G$ is a Hermitian operator, and $\{L_a\}_{a\in\cA}$ (where $\cA$ is an index set) is a set of jump operators.

\begin{defn}
    A Lindbladian $\cL$ satisfies the KMS detailed balance condition with respect to a (full-rank) $\sigma$ if $\cL^\dagger$ is self-adjoint with respect to the KMS inner product. In other words, for any $X$, $Y$, we have 
    \[\langle X, \cL^{\dag}(Y)\rangle_{\sigma} = \langle \cL^{\dag}(X), Y\rangle_{\sigma},\]
    and equivalently, $\Gamma_\sigma \cL^{\dag} = \cL\Gamma_\sigma$.
\end{defn}

If $\cL$ satisfies KMS DBC, we have~\cite[Lemma 9]{ding2025efficient}:
\begin{equation}
    \mdr^{-1/2} L_a = L^\dagger_a,\quad \forall a \in \cA.
\end{equation}

When $\cL$ satisfies the KMS detailed balance condition, one can check that the Gibbs state $\sigma$ is a stationary point of $\cL$, i.e., $\cL[\sigma] = 0$. In this case, we can define the \emph{discriminant} (or parent Hamiltonian) of $\cL$.
\begin{align}\label{eqn:parent-hamiltonian}
    \cH(X) = - \wgt^{-1/2}\cL \wgt^{1/2}(X) = - \sigma^{-1/4} \cL \left[\sigma^{1/4}X\sigma^{1/4}\right]\sigma^{-1/4}.
\end{align}
We can readily verify that $\cH^\dagger = \cH$ due to the KMS detailed balance.
In the literature, the parent Hamiltonian is sometimes defined without the minus sign in~\eqref{eqn:parent-hamiltonian}. Our definition ensures that $\cH \succeq 0$.
The similarity transformation implies that $\cH$ has the same spectrum as $-\cL$, so $\cH[\sigma^{1/2}]=0$ and $\cH \ge 0$.

\subsection{Bohr-frequency components}
For an $n$-qubit Hamiltonian $H=\sum_{j} \lambda_j P_j$, where $\lambda_j$ and $P_j$ denote the eigenvalues and eigenprojectors of $H$, we define the set of Bohr frequencies: 
\begin{align}
    B_H = \{\nu = \lambda_i - \lambda_j \colon\, \lambda_i, \lambda_j \in \mathrm{Spec}(H)\},
\end{align}
where $\mathrm{Spec}(H)$ denotes the spectrum (i.e., the set of eigenvalues) of $H$.
For a (non-zero) matrix (i.e., coupling operator) $A \in \CC^{2^n\times 2^n}$, one can write
\begin{align}
    A = \sum_{\nu \in B_H} A_\nu,\quad A_\nu \coloneqq \sum_{\lambda_i - \lambda_j = \nu} P_i A P_j, 
\end{align}
where $P_i$ is the projection onto the $\lambda_i$-eigenspace. 
We denote 
\begin{align}
    A(t) = e^{iHt}A e^{-iHt} =\sum_{\nu \in B_H} e^{i \nu t} A_{\nu}.
\end{align}
For a function $\varphi(t)$, we define its Fourier transform 
\begin{align}
    \hat{\varphi}(\xi) := \int^\infty_{-\infty} e^{i\xi t} \varphi(t)~\d t,
\end{align}
and the inverse Fourier transform gives $\varphi(t) = \frac{1}{2\pi}\int_{-\infty}^{+\infty} e^{-i\xi t}\hat{\varphi}(\xi)~\d \xi$. 
The weighted operator Fourier transform (WOFT) implements a linear combination of the Bohr-frequency components weighted by $\hat{\varphi}$,
\begin{align}\label{eqn:woft}
    \int^\infty_{-\infty} A(t) \varphi(t)~\d t = \sum_{\nu \in B_H} \hat{\varphi}(\nu)A_\nu.
\end{align}

\subsection{Symmetric bilinear form with Kossakowski matrix}
\label{append:symmetric-form-kossakowski}

Suppose that the Lindbladian $\cL$ satisfies KMS detailed balance. 
For two matrices $X$ and $Y$, we define the symmetric bilinear form of $\cL$ as:
\begin{align}
    \cE(X,Y) = -\langle X,\cL^\dagger(Y) \rangle_{\sigma} = - \Tr[X^\dagger \sigma^{1/2} \cL^\dagger(Y) \sigma^{1/2}].
\end{align}
In particular, $\cE(X,X)$ is called the \emph{Dirichlet form} associated with (the observable matrix) $X$.
Since the Lindbladian $\cL$ has a non-positive spectrum, we have $\cE(X,X) \ge 0$ for any $X$. 

In the following, we show that any KMS-detailed-balanced Lindbladian in the ``Kossakowski form'' admits a sum-of-squares representation of the Dirichlet form.
Notably, this canonical Kossakowski form encompasses many interesting classes of quantum Gibbs samplers, including the Davies generator, CKG~\cite{chen2023efficient}, and DLL~\cite{ding2025efficient} as special cases.

\begin{thm}\label{thm:sum-of-squares-kossa}
    Suppose that a Lindbladian operator is of the form:
    \begin{align}\label{eqn:kossa-form-statement}
        \cL(\cdot) = -i[G, \cdot] + \sum_{\nu, \nu'} \alpha_{\nu,\nu'} \left(A_\nu \cdot A^\dagger_{\nu'} - \frac{1}{2}\{A^\dagger_{\nu'} A_\nu, \cdot\}\right),
    \end{align}
    where $A$ is a non-zero matrix and $\mathbf{C} = (\alpha_{\nu,\nu'})_{\nu,\nu'\in B_H} \succeq 0$ is the Kossakowski matrix, and the coherent term is given by
    \begin{align}\label{eqn:kossa-coherent-G}
        G = \frac{1}{2i}\sum_{\nu, \nu'} \tanh\left(-\frac{\beta (\nu'-\nu)}{4}\right) \alpha_{\nu, \nu'} (A_{\nu'})^\dagger A_\nu .
    \end{align}
    Suppose that we have a matrix $Q\in \CC^{|B_H|\times J}$ such that $\mathbf{C} = QQ^\dagger$, and  $\alpha_{-\nu',-\nu} = \alpha_{\nu, \nu'}e^{\beta(\nu+\nu')/2}$.
    We define $g(t) \coloneqq \frac{1}{\beta \cosh(2\pi t/\beta)}$. Then, the symmetric bilinear form of $\cL$ can be represented as follows,
    \begin{align}\label{eqn:kossa-dirichlet-freq}
        \cE(X,Y) = \int^\infty_{-\infty}g(t)  \sum^J_{j=1}\langle [B_{j,t}, X], [B_{j,t}, Y] \rangle_\sigma~\d t, \quad B_{j,t} \coloneqq e^{iHt}\left(\sum_\nu Q_{\nu, j} e^{\beta \nu/4}A_\nu\right)e^{-iHt} = \sum_{\nu} Q_{\nu, j} e^{i\nu t}e^{\beta \nu/4} A_{\nu}.
    \end{align}
\end{thm}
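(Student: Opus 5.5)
The plan is to work entirely in the energy eigenbasis, with Bohr-frequency bookkeeping, and to compare both sides of \eqref{eqn:kossa-dirichlet-freq} frequency by frequency. Set $\tilde A_{j} \coloneqq \sum_\nu Q_{\nu,j} e^{\beta\nu/4}A_\nu$, so that $B_{j,t} = e^{iHt}\tilde A_j e^{-iHt}$.

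\textbf{Step 1: expand the right-hand side.} Write
\[
\langle [B,X],[B,Y]\rangle_\sigma = \Tr\big([B,X]^\dagger \sigma^{1/2}[B,Y]\sigma^{1/2}\big)
\]
and expand it into four terms of the form $\Tr(X^\dagger B^\dagger \sigma^{1/2} B Y \sigma^{1/2})$ and similar. Insert $B_{j,t}=\sum_\nu Q_{\nu,j}e^{i\nu t}e^{\beta\nu/4}A_\nu$. Each term becomes a double sum over $\nu,\nu'$ with coefficient $\sum_j Q_{\nu,j}\overline{Q_{\nu',j}} = \alpha_{\nu,\nu'}$, a phase $e^{i(\nu-\nu')t}$, and the weight $e^{\beta(\nu+\nu')/4}$. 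Integrating against $g$ replaces the phase by $\hat g(\nu-\nu')$. One checks that $\hat g(\omega) = 1/\cosh(\beta\omega/4)$, since $1/\cosh$ is self-dual under the Fourier transform up to scaling. In the cross terms the $\sigma^{1/2}$ factors must be moved past $A_\nu$ using $\sigma^{s} A_\nu = e^{-s\beta\nu} A_\nu \sigma^{s}$, which produces further exponentials in $\nu,\nu'$.

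\textbf{Step 2: expand the left-hand side.} Compute $\cL^\dagger(Y) = i[G,Y] + \sum \overline{\alpha_{\nu,\nu'}}\,(\dots)$ in Heisenberg form. Then write
\[
-\Tr\big(X^\dagger\sigma^{1/2}\cL^\dagger(Y)\sigma^{1/2}\big)
\]
in the same four-structure basis: terms like $X^\dagger A^\dagger \sigma^{1/2} A Y\sigma^{1/2}$, the anticommutator terms, and the coherent term involving $G$ from \eqref{eqn:kossa-coherent-G}. Use the symmetry $\alpha_{-\nu',-\nu}=\alpha_{\nu,\nu'}e^{\beta(\nu+\nu')/2}$ together with $A_\nu^\dagger = (A^\dagger)_{-\nu}$. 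This lets one rewrite the ``sandwich'' term $\sigma^{1/2}A_\nu \cdot A_{\nu'}^\dagger\sigma^{1/2}$ in the same ordering as the right-hand side; it is really the KMS condition $\Delta_\sigma^{-1/2}$ applied to the jump structure.

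\textbf{Step 3: match coefficients.} The sandwich terms should match directly, with factor $e^{\beta(\nu+\nu')/4}$. The diagonal terms $X^\dagger \tilde A^\dagger\tilde A Y$ on the right combine $\hat g(\nu-\nu')$ with the weights. On the left, the corresponding coefficient is the anticommutator's $\tfrac12$ plus the coherent contribution $\pm\tfrac12\tanh(\beta(\nu'-\nu)/4)$, after conjugation by $\sigma^{1/2}$ gives factors $e^{\pm\beta(\nu-\nu')/4}$. The key scalar identity needed is
\[
\frac{e^{\beta\omega/4}}{\cosh(\beta\omega/4)} = 1+\tanh(\beta\omega/4), \qquad \omega = \nu-\nu',
\]
possibly with the companion identity with $-\omega$. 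This is exactly how $\tfrac12\{\cdot\}$ and the $\tanh$ coherent term combine.

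\textbf{Main obstacle.} The hardest part is the bookkeeping in Steps 2 and 3. One must track which side $\sigma^{1/2}$ sits on, the resulting sign of $\omega$ in each of the four terms, and the use of the $\alpha$-symmetry to relabel $(\nu,\nu')\to(-\nu',-\nu)$ so that both sides are expressed with the same $\tilde A_j$. I would first verify the argument on the rank-one case $\mathbf{C}=vv^\dagger$ with $X=Y$, then extend by linearity in $j$ and polarization.
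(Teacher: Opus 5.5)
Your strategy uses the same ingredients as the paper: Bohr-frequency bookkeeping, moving $\sigma^{1/2}$ past $A_\nu$, the relabeling $(\nu,\nu')\to(-\nu',-\nu)$ via the symmetry of $\alpha$, the identity $1\pm\tanh x=e^{\pm x}/\cosh x$, the Fourier transform of $g$, and $\mathbf{C}=QQ^\dagger$. You expand both sides and compare, whereas the paper rewrites $\cL^\dagger(Y)$ into $A_{\nu'}^\dagger[Y,A_\nu]$ and $[A_\nu,Y]A_{\nu'}^\dagger$ terms to reach \eqref{eqn:cE-1}. That difference is harmless.

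However, Step 3 as you describe it would not close, for two reasons.

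First, the normalization of $\hat g$ is wrong: $\hat g(\omega)=\frac{1}{2\cosh(\beta\omega/4)}$, not $1/\cosh(\beta\omega/4)$, since $\hat g(0)=\int g=\tfrac12$. With your value every coefficient on the right doubles. Your own diagonal comparison would then read $1+\tanh$ on the right against $\tfrac12(1+\tanh)$ on the left.

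Second, the sandwich terms do not match directly. After summing over $j$ and integrating, the two cross terms of $\langle [B,X],[B,Y]\rangle_\sigma$ become
\[
-\sum_{\nu,\nu'}\alpha_{\nu,\nu'}\,\hat g(\omega)\,e^{\beta\omega/4}\,\langle X,A_{\nu'}^\dagger Y A_\nu\rangle_\sigma
\quad\text{and}\quad
-\sum_{\nu,\nu'}\alpha_{\nu,\nu'}\,\hat g(\omega)\,e^{\beta(\nu+3\nu')/4}\,\langle X,A_\nu Y A_{\nu'}^\dagger\rangle_\sigma ,
\qquad \omega=\nu-\nu'.
\]
Relabel the second by $(\nu,\nu')\to(-\nu',-\nu)$, using the $\alpha$-symmetry and $A_{-\nu}=A_\nu^\dagger$ (i.e.\ Hermitian $A$, which the paper assumes). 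It becomes $-\sum\alpha_{\nu,\nu'}\hat g(\omega)e^{-\beta\omega/4}\langle X,A_{\nu'}^\dagger YA_\nu\rangle_\sigma$. The two together reproduce the jump term of $-\langle X,\cL^\dagger(Y)\rangle_\sigma$ precisely because $2\cosh(\beta\omega/4)\,\hat g(\omega)=1$. So the $\cosh$ is consumed in the sandwich terms as well, not only in the anticommutator and coherent terms. This is exactly the paper's splitting of $A_{\nu'}^\dagger YA_\nu$ with weights $e^{\pm\beta\omega/4}/(2\cosh(\beta\omega/4))$, and it is what fixes the normalization of $g$.

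You also need to pin down the sign you left as $\pm$. Matching the diagonal terms forces the coefficient of $A_{\nu'}^\dagger A_\nu Y$ in $-\cL^\dagger(Y)$ to be $\tfrac12(1+\tanh(\beta\omega/4))$, and that of $YA_{\nu'}^\dagger A_\nu$ to be $\tfrac12(1-\tanh(\beta\omega/4))$. This is what the paper's \eqref{eq:B11} uses. It corresponds to the KMS coherent term $G=\frac{1}{2i}\sum\tanh(-\beta(\nu-\nu')/4)\,\alpha_{\nu,\nu'}A_{\nu'}^\dagger A_\nu$, where $\nu-\nu'$ is the Bohr frequency of $A_{\nu'}^\dagger A_\nu$. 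If you expand $+i[G,Y]$ literally with the $\tanh$ argument as printed in the statement, $-\beta(\nu'-\nu)/4$, you get the opposite sign and the identity fails in general.

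Finally, your plan to verify the rank-one case and then extend by linearity in $j$ does not work as stated. The symmetry $\alpha_{-\nu',-\nu}=\alpha_{\nu,\nu'}e^{\beta(\nu+\nu')/2}$, which every relabeling needs, is a property of $\mathbf{C}$. It is generally not inherited by the rank-one pieces $Q_{\cdot,j}Q_{\cdot,j}^\dagger$; for Davies, a single diagonal entry already violates it. Do the matching at the level of $\alpha_{\nu,\nu'}$, as your Step 1 already sets up, and insert $\mathbf{C}=QQ^\dagger$ only at the end, as the paper does. Polarization from $X=Y$ is fine, since both sides are sesquilinear.
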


\begin{rem}
    For simplicity, the Lindbladian operator as in~\cref{eqn:kossa-form-statement} only involves a single coupling operator $A$. It is straightforward to generalize the result to a set of coupling operators $\{A_k\}$, and the sum-of-squares factorization still holds by summing over all $\{A_k\}$.
\end{rem}

\begin{rem}
    The factored representation of the symmetric bilinear form as in~\cref{eqn:kossa-dirichlet-freq} was previously known only for the CKG sampler, see~\cite[Lemma X.2]{chen2025quantum} and~\cite[Lemma C.1]{rouze2025efficient}.
    Recently, a more general form is reported in~\cite[Appendix A]{slezak2026polynomial}, but only in the Bohr-frequency resolved form therefore does not immediately lead to efficient quantum implementation.
\end{rem}

\begin{proof}
    First, by the identity 
    \[1 \pm \tanh(x) = \frac{e^{\pm x}}{\cosh(x)},\]
    we have that
    \begin{align}
        \cL^\dagger(Y) & = \sum_{\nu,\nu'} \alpha_{\nu,\nu'}^* \left(
        A_{\nu}^\dagger Y A_{\nu'}- \frac{1}{2} \left( \frac{e^{\beta(\nu'-\nu)/4}}{\cosh(\beta(\nu'-\nu)/4)}A_{\nu}^\dagger A_{\nu'} Y +  \frac{e^{-\beta(\nu'-\nu)/4}}{\cosh(\beta(\nu'-\nu)/4)}\,Y A_{\nu}^\dagger A_{\nu'}  \right)\right)\\    
        &= \sum_{\nu,\nu'} \alpha_{\nu,\nu'}\left(A^\dagger_{\nu'} Y A_\nu - \frac{1}{2}\left(\frac{e^{\beta(\nu-\nu')/4}}{\cosh(\beta(\nu-\nu')/4)}A^\dagger_{\nu'} A_\nu Y +  \frac{e^{\beta(\nu'-\nu)/4}}{\cosh(\beta(\nu-\nu')/4)} YA^\dagger_{\nu'} A_\nu \right)\right)  \label{eq:B11} \\
        &= \sum_{\nu,\nu'} \alpha_{\nu,\nu'} \left( \frac{e^{\beta(\nu-\nu')/4}}{2\cosh(\beta(\nu-\nu')/4)} A^\dagger_{\nu'}[Y,A_\nu] + \frac{e^{\beta(\nu'-\nu)/4}}{2\cosh(\beta(\nu-\nu')/4)} [A_{\nu'}^\dagger, Y]A_{\nu} \right)\label{eq:B12}\\
        &= \sum_{\nu,\nu'} \alpha_{\nu,\nu'} \left( \frac{e^{\beta(\nu-\nu')/4}}{2\cosh(\beta(\nu-\nu')/4)} A^\dagger_{\nu'}[Y,A_\nu] + \frac{e^{\beta(\nu'-\nu)/4}}{2\cosh(\beta(\nu-\nu')/4)} e^{\beta(\nu+\nu')/2} [A_\nu, Y]A^\dagger_{\nu'} \right)
    \end{align}
    In Eq.~(\ref{eq:B11}), we use the assumption that the Kossakowski matrix $\mathbf{C}$ is Hermitian, which implies $\alpha_{\nu,\nu'}^* = \alpha_{\nu',\nu}$, and then relabel the indices $\nu, \nu'$ as $\nu', \nu$. 
    To obtain the last identity, we relabel $\nu \mapsto -\nu'$ and $\nu' \mapsto -\nu$ in the second term in~\cref{eq:B12} and use the fact that 
    \[\alpha_{-\nu',-\nu} = \alpha_{\nu, \nu'}e^{\beta(\nu+\nu')/2},\quad A_{-\nu} = A^\dagger_\nu.\]
    Then, direct calculation shows that 
    \begin{align}\label{eqn:A20-inner-1}
        \langle X, A^\dagger_{\nu'} [Y, A_{\nu}]\rangle_\sigma 
        &= tr\left(X^\dagger \sr A^\dagger_{\nu'} [Y, A_{\nu}] \sr\right)= e^{\beta\nu'/2} \langle A_{\nu'}X, [Y, A_{\nu}]\rangle_\sigma,
    \end{align}
    \begin{align}\label{eqn:A20-inner-2}
        \langle X, [A_{\nu}, Y]A^\dagger_{\nu'}\rangle_\sigma 
        &= tr\left(X^\dagger \sr [A_{\nu}, Y]A^\dagger_{\nu'} \sr\right) = - e^{-\beta\nu'/2} \langle X A_{\nu'}, [Y, A_{\nu}]\rangle_\sigma.
    \end{align}
    Plugging~\cref{eqn:A20-inner-1} and~\cref{eqn:A20-inner-2} into the symmetric bilinear form $\cE(X,Y)$, we obtain
    \begin{align}\label{eqn:cE-1}
        \cE(X,Y) = \sum_{\nu,\nu'} \alpha_{\nu,\nu'}\frac{e^{\beta(\nu+\nu')/4}}{2\cosh(\beta(\nu-\nu')/4)} \langle [A_{\nu'}, X], [A_\nu, Y] \rangle_\sigma.
    \end{align}
    Finally, by plugging the identity (note that $\cosh(x)$ is symmetric),
    \begin{align}\label{eqn:cosh-fourier}
        \frac{1}{2\cosh\left(\beta(\nu - \nu')/4 \right)} = \int_{-\infty}^{+\infty} g(t) e^{i(\nu-\nu')t} \d t,
    \end{align}
    into~\cref{eqn:cE-1} and using the factorization of the Kossakowski matrix $\mathbf{C} = QQ^\dagger$, we can express the bilinear form as 
    \begin{align}
        \cE(X,Y) &= \int_{-\infty}^{+\infty} g(t)\sum_{\nu,\nu'} \alpha_{\nu,\nu'}\langle [e^{i\nu' t}e^{\beta \nu'/4}A_{\nu'}, X], [e^{i\nu t} e^{\beta \nu/4}A_\nu, Y] \rangle_\sigma  \d t\\
        &= \int_{-\infty}^{+\infty} g(t) \sum_{\nu,\nu'} \sum_{j} Q_{\nu,j}Q^\dagger_{j,\nu'} \langle [e^{i\nu' t}e^{\beta \nu'/4}A_{\nu'}, X], [e^{i\nu t} e^{\beta \nu/4}A_\nu, Y] \rangle_\sigma \d t\\
        &= \int_{-\infty}^{+\infty} g(t) \sum_{j} \langle [\sum_{\nu'}Q_{\nu',j}e^{i\nu' t}e^{\beta \nu'/4}A_{\nu'}, X], [\sum_\nu Q_{\nu,j} e^{i\nu t} e^{\beta \nu/4}A_\nu, Y] \rangle_\sigma \d t.
    \end{align}
    This proves the theorem.
\end{proof}
In what follows, to simplify notation, we also write $B_{j,t} = e^{iHt}B_je^{-iHt}$, where $B_j$ denotes the linear combination of $A_\nu$ according to the Kossakowski factorization:
\begin{align}
    \label{eq:Bv}
    B_j \coloneqq \sum_\nu Q_{\nu, j} e^{\beta \nu/4}A_\nu.
\end{align}

\subsection{Factorization of the parent Hamiltonian}

Recall that the parent Hamiltonian $\cH$ is defined as 
\begin{equation}
    \cH[\cdot] = -\sigma^{-1/4}\cL[\sigma^{1/4}\cdot\sigma^{1/4}]\sigma^{-1/4} = - \Gamma_{\sigma}^{-\frac{1}{2}} \cL \Gamma_{\sigma}^{\frac{1}{2}}[\cdot], 
\end{equation}
When $\cL$ satisfies KMS DBC, we know that $\cH$ is Hermitian w.r.t. Hilbert-Schmidt inner product, thus
\begin{align}\label{eqn:hermiticity-parent-hamiltonian}
    \cH = \cH^\dagger = - \Gamma_{\sigma}^{\frac{1}{2}} \cL^\dagger \Gamma_{\sigma}^{-\frac{1}{2}}.
\end{align}

The sum-of-squares representation of Lindbladian bilinear form (\cref{thm:sum-of-squares-kossa}) immediately leads to a factorization of the parent Hamiltonian, as stated in the following result. 

\begin{prop}\label{prop:parent-sum-of-squares}
    Suppose that the Lindbladian $\cL$ satisfies the KMS detailed balance condition and its symmetric bilinear form admits a factorization as in~\cref{eqn:kossa-dirichlet-freq}.
    Then, the parent Hamiltonian of $\cL$ can be represented as 
    \begin{align} 
        \cH &= \int^\infty_{-\infty} g(t) \sum^J_{j=1} \left(\cB_{j,t}\right)^\dagger\cB_{j,t},\quad \cB_{j,t}[X] \coloneqq B^\sharp_{j,t} X - X B^\flat_{j,t},
    \end{align}
    where $g(t) \coloneqq \frac{1}{\beta \cosh(2\pi t/\beta)}$, $B^*_{j,t} \coloneqq e^{iHt} B^{*}_j e^{-iHt}$ (for $* \in \{\sharp,\flat\}$), and
    \begin{align}
        B^\sharp_j = \mdr^{1/4}(B_j),\quad B^\flat_j = \mdr^{-1/4}(B_j).
    \end{align}
    for $B_j$ defined in Eq.~(\ref{eq:Bv}).
\end{prop}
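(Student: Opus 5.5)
The plan is to pass from the Dirichlet form to the parent Hamiltonian through the Hilbert--Schmidt quadratic form. Using the hermiticity relation~\cref{eqn:hermiticity-parent-hamiltonian}, for any $X,Y$ we write $\langle X,\cH(Y)\rangle = -\langle X, \wgt^{1/2}\cL^\dagger\wgt^{-1/2}(Y)\rangle$. Then we set $X'=\wgt^{-1/2}(X)=\sigma^{-1/4}X\sigma^{-1/4}$ and $Y'=\wgt^{-1/2}(Y)$. Since $\wgt^{1/2}$ is self-adjoint for the HS inner product and $\langle A,\wgt(B)\rangle=\langle A,B\rangle_\sigma$, this gives
\[
\langle X,\cH(Y)\rangle = -\langle X',\cL^\dagger(Y')\rangle_\sigma = \cE(X',Y').
\]
Invoking the hypothesis~\cref{eqn:kossa-dirichlet-freq} (i.e.\ \cref{thm:sum-of-squares-kossa}), this equals $\int g(t)\sum_j \langle [B_{j,t},X'],[B_{j,t},Y']\rangle_\sigma\,\d t$. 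Because a sesquilinear form determines the operator, it suffices to show for each $j,t$ that
\[
\langle [B_{j,t},X'],[B_{j,t},Y']\rangle_\sigma = \langle \cB_{j,t}X,\cB_{j,t}Y\rangle .
\]

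The key computation rewrites the KMS inner product as an HS one. We have $\langle U,V\rangle_\sigma = \langle \sigma^{1/4}U\sigma^{1/4},\sigma^{1/4}V\sigma^{1/4}\rangle$, so we need to identify $\sigma^{1/4}[B_{j,t},\sigma^{-1/4}Y\sigma^{-1/4}]\sigma^{1/4}$. Expanding the commutator gives
\[
\sigma^{1/4}B_{j,t}\sigma^{-1/4}\,Y - Y\,\sigma^{-1/4}B_{j,t}\sigma^{1/4} = \mdr^{1/4}(B_{j,t})Y - Y\mdr^{-1/4}(B_{j,t}).
\]
Since $\sigma$ commutes with $e^{iHt}$, we get $\mdr^{\pm1/4}(B_{j,t}) = e^{iHt}\mdr^{\pm 1/4}(B_j)e^{-iHt}$, which is $B^\sharp_{j,t}$ or $B^\flat_{j,t}$ respectively. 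So the expression is exactly $\cB_{j,t}[Y]$, and the same holds for $X$. Summing over $j$ and integrating against $g$ then gives $\langle X,\cH Y\rangle=\langle X,\int g\sum_j\cB_{j,t}^\dagger\cB_{j,t}\,Y\rangle$ for all $X,Y$, which is the claim.

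The main obstacle is bookkeeping rather than substance. Three things need care: the HS-adjoint conventions, verifying that $\wgt^{1/2}$ is HS-self-adjoint, and confirming the correct sign in the parent Hamiltonian. A secondary point is justifying the exchange of the $t$-integral with the inner products. This is harmless because the dimension is finite, $g$ is integrable, and $B_{j,t}$ is uniformly bounded in $t$, so the operator-valued integral converges absolutely.
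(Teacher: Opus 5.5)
Your proposal is correct and follows essentially the same route as the paper. Both arguments identify $\cH$ through the Dirichlet form using the hermiticity relation $\cH=-\wgt^{1/2}\cL^\dagger\wgt^{-1/2}$, and then conjugate the commutator $[B_{j,t},\cdot\,]$ by $\wgt^{1/2}$ to obtain $\cB_{j,t}$. The only difference is that the paper does this at the operator level, writing $\cB_{j,t}=\wgt^{1/2}\circ\cC_{j,t}\circ\wgt^{-1/2}$, whereas you substitute $X'=\wgt^{-1/2}X$ and $Y'=\wgt^{-1/2}Y$ into the arguments, which is the same computation.
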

\begin{proof}
    By the definition of bilinear form, we have
    \begin{align}
        \cE(X,Y) &= -tr\left(X^\dagger \sr \cL^\dagger(Y) \sr \right)\\ 
        &= -\langle X, (\wgt \cL^\dagger)(Y) \rangle\\
        &= \langle X, \wgt^{1/2}(- \wgt^{1/2}\cL^\dagger \wgt^{-1/2})[\wgt^{1/2}(Y)]\rangle \label{eqn:A34-step-3}\\
        & = \langle X, \wgt^{1/2} \cH \wgt^{1/2}(Y) \rangle,
    \end{align}
    where in~\cref{eqn:A34-step-3}, we use the Hermiticity of the parent Hamiltonian~\cref{eqn:hermiticity-parent-hamiltonian}.
    Meanwhile, by~\cref{thm:sum-of-squares-kossa}, we have (where $\cC_{j,t}(X) \coloneqq [B_j(t), X]$ denotes the commutator in~\cref{eqn:kossa-dirichlet-freq})
    \begin{align}
        \cE(X,Y) &= \int^\infty_{-\infty} g(t)\sum_j \langle \cC_{t,j}(X), \wgt\cC_{t,j}(Y) \rangle  \d t
        = \left\langle X,  \int^\infty_{-\infty} g(t)\sum^J_{j=1} (\cC_{t,j})^\dagger \wgt \cC_{t,j}(Y)\d t \right\rangle.
    \end{align}
    By identifying the two equations, we have
    \begin{align}\label{eqn:discriminant-factor}
        \cH &= \wgt^{-1/2}\left(\int^\infty_{-\infty} g(t)\sum^J_{j=1} (\cC_{t,j})^\dagger \wgt \cC_{t,j}\d t\right) \wgt^{-1/2} 
        = \int^\infty_{-\infty} g(t)\sum_j (\cB_{j,t})^\dagger\cB_{j,t}~\d t,
    \end{align}
    where
    \begin{align}
        \cB_{j,t}[X] &= \wgt^{1/2} \circ \cC_{j,t}\circ \wgt^{-1/2}[X]\\
        &= \sigma^{1/4}B_{j,t}\sigma^{-1/4}X - X\sigma^{-1/4}B_{j,t}\sigma^{1/4}\\
        &= B^\sharp_{j,t} X - X B^\flat_{j,t}.
    \end{align}
\end{proof}

\subsection{Several classes of quantum Gibbs samplers}\label{append:classes-gibbs-samplers}

As we mentioned, the Kossakowski form in~\cref{thm:sum-of-squares-kossa} encompasses several major classes of known quantum Gibbs samplers.
In this subsection, we give explicit calculations of the factorization of the parent Hamiltonians for the Davies generator and two prominent classes of KMS detailed balanced Lindbladians. 
For simplicity, we only consider a single jump operator in the Lindbladian; the generalization to multiple jumps should be straightforward.

\subsubsection{Davies generators}
The Davies generator is described by the following Lindbladian:
\begin{align}\label{eqn:davies}
    \mathcal{L}_\beta(\rho) = \sum_{\nu \in B_H} \gamma(\nu) \left(A_\nu \rho A^\dagger_\nu - \frac{1}{2}\{A^\dagger_\nu A_\nu, \rho\}\right),
\end{align}
where $\gamma(\nu) = \min(1, e^{-\beta \nu})$. This Lindbladian satisfies KMS detailed balance condition and its Kossakowski matrix is diagonal: $\alpha_{\nu,\nu} = \gamma(\nu)$.
Therefore, we can write $\mathbf{C}_{\rm Davies} = QQ^\dagger$, where $Q\in \mathbb{C}^{|B_H|\times |B_H|}$ and 
\begin{align}
    B_\nu(t) = \sqrt{\gamma(\nu)} e^{i\nu t}e^{\beta \nu/4} A_\nu = e^{iHt} \left(\sqrt{\gamma(\nu)}e^{\beta \nu/4} A_\nu\right) e^{-iHt}.
\end{align}
Now, by~\cref{prop:parent-sum-of-squares}, we can compute the factors in the parent Hamiltonian (ignoring the phase $e^{i\nu t}$):
\begin{align}
    B^\sharp_\nu &= \sigma^{1/4}B_\nu\sigma^{-1/4} = \sqrt{\gamma(\nu)} A_\nu = \gamma^+(\nu) A_\nu,\\
    B^\flat_\nu &= \sigma^{-1/4}B_\nu\sigma^{1/4} = \sqrt{\gamma(\nu)}e^{\beta \nu/2} A_\nu = \gamma^-(\nu) A_\nu,
\end{align}
where  
\begin{align}
    \gamma^-(\nu) = \min(e^{\beta \nu/2},1),\quad \gamma^+(\nu) = \min(1, e^{-\beta \nu/2}).
\end{align}

\subsubsection{The Chen--Kastoryano--Gily\'en sampler}  
The Gibbs sampler constructed by Chen, Kastoryano, and Gily\'en (CKG,~\cite{chen2023efficient}) satisfies exact KMS detailed balance. The Lindbladian is parametrized by three parameters $\omega_\gamma,\sigma_\gamma$ and $\sigma_E$, as given by 
\begin{equation}
    \cL(\rho) = -i[G,\rho] + \int^\infty_{-\infty} \gamma(\omega) \left( \widehat{A}(\omega) \rho (\widehat{A}(\omega))^\dagger - \frac{1}{2}\left\{(\widehat{A}(\omega))^\dagger \widehat{A}(\omega), \rho\right\}\right),
\end{equation}
where $G$ is the coherent part (details omitted here, as they are completely determined by the dissipative part due to KMS detailed balance), and
\begin{align}
    \gamma(\omega) &\coloneqq\exp\left(-\frac{(\omega+w_\gamma)^2}{2\sigma_\gamma^2}\right) \text{ with shift $\sigma_\gamma^2:=\frac{2 w_\gamma}{\beta}-\sigma_E^2$},\\
    \widehat{A}(\omega) &= \frac{1}{\sqrt{2\pi}}\int^\infty_{-\infty} e^{iHt} A e^{-iHt} e^{-i\omega t} f(t)~\d t, \text{ where } f(t) = e^{-\sigma^2_E t^2}\sqrt{\sigma_E \sqrt{2/\pi}}.
\end{align}
We can write the CKG Lindbladian in the canonical form~\cref{eqn:kossa-form-statement}, where the Kossakowski matrix $\mathbf{C}_{\rm CKG} = (\alpha_{\nu,\nu'})_{\nu, \nu' \in B_H}$ is given by
\begin{align}\label{eq:alphav1v2}
    \alpha_{\nu,\nu'} \coloneqq \int_{-\infty}^{+\infty} \gamma(\omega) \beta_{\nu}^\omega \beta_{\nu'}^\omega  \d \omega, \text{ where } \beta_\nu^\omega &\coloneqq \frac{1}{\sqrt{2\sigma_E\sqrt{2\pi}}}\exp\left(-\frac{(\omega-\nu)^2}{4\sigma_E^2}\right).
\end{align}
For simplicity, we choose $\sigma_E = \sigma_\gamma = \omega_\gamma = 1/\beta$. 
Note that the Kossakowski matrix can be written as 
\begin{align}
    \mathbf{C}_{\rm CKG} = \int \gamma(\omega) vv^\dagger d \omega, \text{ where $v$ is a column vector},  v_\nu = \beta^\omega_\nu.
\end{align}
By~\cref{thm:sum-of-squares-kossa}, the Dirichlet form can be written as 
\begin{align}
    \cE(X,Y) &= \int^\infty_{-\infty} \int^\infty_{-\infty} g(t) \gamma(\omega) \langle [B_{t,\omega}, X], [B_{t,\omega},Y] \rangle_\sigma~\d t~\d \omega,\\
    \text{where~} B_{t,\omega} &\coloneqq \sum_{\nu} \beta^\omega_\nu e^{i\nu t}e^{\beta \nu/4} A_{\nu} = e^{iHt}\left(\sum_{\nu}\beta^\omega_\nu e^{\beta \nu/4} A_{\nu}\right)e^{-iHt}.
\end{align}
We write $B_\omega=\sum_{\nu}\beta^\omega_\nu e^{\beta \nu/4}A_{\nu}.$
Then, the factors in the parent Hamiltonian can be computed using~\cref{prop:parent-sum-of-squares},
\begin{align}
    B^\sharp_\omega &= \sigma^{1/4} B_\omega \sigma^{-1/4} = \sum_{\nu}\beta^\omega_\nu A_{\nu},\\
    B^\flat_\omega &= \sigma^{-1/4} B_\omega \sigma^{1/4} = \sum_{\nu}\beta^\omega_\nu e^{\beta \nu/2} A_{\nu}.
\end{align}
We can represent $B^\sharp_\omega$ and $B^\flat_\omega$ using the weighted operator Fourier transform:
\begin{align}
    &B^\sharp_\omega = \int^\infty_{-\infty} A(t) \varphi^{\omega}_{\sharp}(t)\d t, \quad \varphi^{\omega}_{\sharp}(t) := \frac{\sqrt{\sigma_E}}{(2\pi)^{3/4}} e^{-\sigma^2_E t^2 - i \omega t},\label{eqn:ckg-bsharp}\\
    &B^\flat_\omega = \int^\infty_{-\infty} A(t) \varphi^{\omega}_{\flat}(t)\d t, \quad \varphi^{\omega}_{\flat}(t) := \frac{\sqrt{\sigma_E}\exp(\frac{\sigma^2_E\beta^2 + 2\omega \beta}{4})}{(2\pi)^{3/4}} e^{-\sigma^2_E t^2 - i (\omega+\sigma^2_E\beta) t},\label{eqn:ckg-bflat}
\end{align}
and it follows that 
\begin{align}
    \|B^\sharp_\omega\| \le \left(\frac{\pi}{2}\right)^{1/4}\|A\|,\quad \|B^\flat_\omega\| \le \left(\frac{\pi}{2}\right)^{1/4}\exp\left(\frac{\sigma^2_E\beta^2 + 2\omega \beta}{4}\right)\|A\|.
\end{align}
For the parameter choice $\sigma_E = \sigma_\gamma = \omega_\gamma = 1/\beta$ and frequencies $\omega$ in the effective support of $\gamma(\omega)$, both norms are $\cO(\|A\|)$. 
Moreover, the representation~\cref{eqn:ckg-bsharp} and~\cref{eqn:ckg-bflat} immediately gives a recipe to efficiently implement the matrices $B^\sharp_{\omega}$ and $B^\flat_\omega$ using the weighted operator Fourier transform; for details, see~\cite[Theorem~I.2]{chen2023efficient}.

\subsubsection{The Ding--Li--Lin sampler}
Ding, Li, and Lin~\cite{ding2025efficient} introduces another class of exact KMS-detailed-balanced Lindbladian that only uses finitely many jump operators. 
Without loss of generality, we use a single coupling operator $A$, and the Lindbladian takes the following form:
\begin{align}
    \mathcal{L}(\cdot) = -i[G, \cdot] + L \cdot L^\dagger - \frac{1}{2}\{L^\dagger L, \cdot\},
\end{align}
where the jump operator 
\begin{align}
    &L = \int^\infty_{-\infty} f(t) e^{iHt}Ae^{-iHt}~\d t = \sum_{\nu \in B_H} \hat{f}(\nu)A_\nu = \sum_{\nu \in B_H} q(\nu)e^{-\beta \nu/4}A_\nu,\\
    &\text{with~~}f(t) = \frac{1}{2\pi}\int^\infty_{-\infty} q(\nu) e^{-\beta \nu/4} e^{-it\nu}~\d \nu.\label{eqn:dll-filter-f}
\end{align}
The choice of the weighting function $q(\nu) \colon \RR \to \CC$ must satisfy $q(-\nu) = \overline{q(\nu)}$.
Since we only use a single jump operator, the Kossakowski matrix is of rank 1: $\mathbf{C}_{\rm DLL} = vv^\dagger$, where $v$ is a column vector and $v_\nu = q(\nu)e^{-\beta\nu/4}$.
The coherent part $G$ is given in~\cite[Eq. (2.33)]{ding2025efficient}, which is precisely the same as in~\cref{thm:sum-of-squares-kossa}.
Therefore, the Dirichlet form can be written as a sum of squares, with  
\begin{align}
    B = \sum_\nu v_\nu e^{\beta \nu/4}A_\nu = \sum_\nu q(\nu) A_\nu
\end{align}
To compute the decomposition of the parent Hamiltonian, we use~\cref{prop:parent-sum-of-squares}:
\begin{align}
    B^\sharp &= \sigma^{1/4} B \sigma^{-1/4} = \sum_{\nu}q(\nu) e^{-\beta \nu/4} A_{\nu},\\
    B^\flat &= \sigma^{-1/4} B \sigma^{1/4} = \sum_{\nu}q(\nu) e^{\beta \nu/4} A_{\nu}.
\end{align}
We immediately observed that $B^\sharp = L$ and $B^\flat = \mdr^{-1/2}(L)$. 
Specifically, the jump operator $L = B^\sharp$ can be implemented using the weighted operator Fourier transform with a filter function $f(t)$ as in~\cref{eqn:dll-filter-f}. The operator $B^\flat$ can be implemented by
\begin{align}
 B^\flat = \int^\infty_{-\infty} e^{iHt} A e^{-iHt} \tilde{f}(t)~\d t,\quad \tilde{f}(t) = \frac{1}{2\pi}\int^\infty_{-\infty} q(\nu) e^{\beta \nu/4} e^{-it\nu}~\d \nu.
\end{align}
In~\cite[Section 3.2]{ding2025efficient}, two specific classes of the weighting functions $q(\nu)$ are discussed (namely, the Metropolis type and the Gaussian type). In these choices, $q(\nu)$ is real symmetric, which implies that $\tilde{f}(t)=f(-t)$. Moreover, the $L^1$-norm of $f$ (therefore also for $\tilde{f}$) is bounded by $\cO(1)$ given appropriate parametrizations of the weighting function $q(\nu)$, as shown in~\cite[Lemma 29]{ding2025efficient}.
Therefore, the sub-normalization factors of the matrices $B^\sharp$ and $B^\flat$ only depend on $\|A\|$, and we can apply the same quadrature method as in~\cite[Section~3.3]{ding2025efficient} to implement the block-encoded matrices.

\subsection{On the role of the weighting function $g$}

In~\cref{prop:parent-sum-of-squares}, the parent Hamiltonian $\cH$ is represented as an integral over the time domain with a weighting function $g(t) = \frac{1}{\beta \cosh(2\pi t/\beta)}$. 
This weighting results from the coherent term $G$ (see~\cref{eqn:kossa-coherent-G}). For KMS detailed balanced Lindbladian $\cL$, due to a rigid structural theorem~\cite[Theorem 10]{ding2025efficient}, this choice of the coherent term is unique once the dissipative part is fixed.
In other words, this particular choice of the weighting function $g(t)$ is a consequence of KMS detailed balance.

Since $\cB_{j,t}[\sigma^{1/2}] = 0$ for all $j$ and $t$, the parent Hamiltonian is ``frustration-free''. Therefore, with any non-negative weighting function $w(t)\ge 0$, we can define a \emph{generalized discriminant} 
\begin{equation}
    \cH_w := \int^\infty_{-\infty} w(t) \sum^J_{j=1} \left(\cB_{j,t}\right)^\dagger\cB_{j,t} ~\d t,
\end{equation}
which still admits $\sigma^{1/2}$ as a ground state. However, it is not clear if this new Hamiltonian $\cH_w$ exhibits a spectral gap that is comparable to the parent Hamiltonian $\cH$.
A systematic investigation of gap-preserving weighting functions is out of the scope of the current work. 
Here, we numerically test a natural choice 
\begin{align}
    w(t) = \frac{1}{2}\delta_0(t),
\end{align}
which can be viewed as the limit of $g(t)$ in the regime $\beta \to 0$. 
Note that $\int^\infty_{-\infty}g(t)\d t = \frac{1}{2}$ for all $\beta > 0$, so we add a $1/2$ normalization factor in the delta distribution to maintain the same probability mass.
With this delta-mass weigting function, the Hamiltonian reduces to 
\begin{equation}
    \cH_0 = \frac{1}{2} \sum^J_{j=1} \left(\cB_{j}\right)^\dagger\cB_{j}, \quad \cB_j[X] := \cB_{j,0}[X] = B^\sharp_j X - X B^\flat_j.
\end{equation}
In~\cref{fig:spectral_gap_ratio}, we numerically calculate the spectral gap of $\cH_0$ and $\cH$ using a 2-qubit TFIM model 
\begin{align}
    H = -Z_1Z_2 -0.1(X_1+X_2) 
\end{align}
and plot their ratio as a function of $\beta$. 
For small $\beta$, these two gaps are very close because the weighting function $g(t)$ is approximately a delta function (up to a 1/2 factor). As $\beta$ increase, the spectral gap of $\mathcal{H}_0$ becomes smaller than that of the parent Hamiltonian.
Interestingly, the ratio is not a monotone function in $\beta$: it first decreases to around $0.93$ as $\beta$ approaches to $1.8$, and then gradually increases and plateaus at around $0.99$.
For this model, using the factorization of $\cH_0$ instead of the full $\cH$ in the QSVT algorithms to prepare purified Gibbs state appears to be a practical approach, as their spectral gaps do not differ significantly (especially for large $\beta$). It is currently unclear if this relation always holds for larger systems or in a different parameter regime. We leave this study to future exploration.

\begin{figure}
    \centering
    \includegraphics[width=0.8\linewidth]{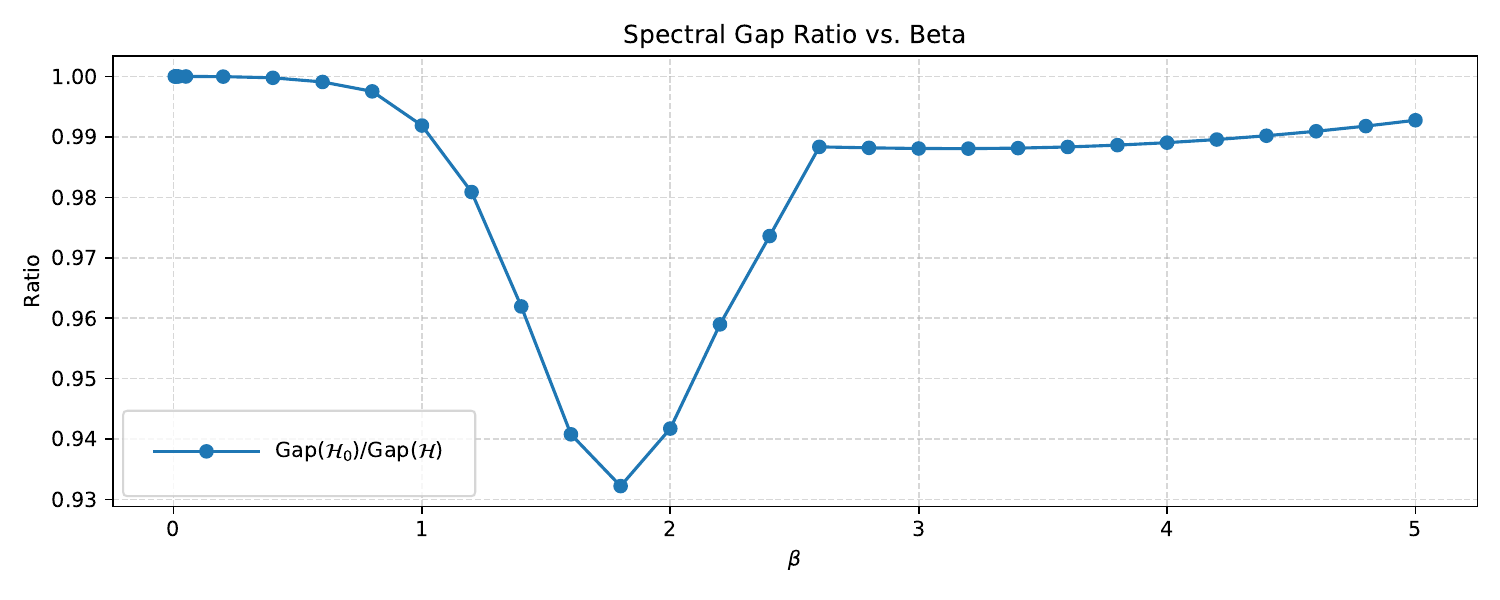}
    \caption{The ratio between the spectral gap of the new Hamiltonian $\cH_0$ and the parent Hamiltonian $\cH$.}
    \label{fig:spectral_gap_ratio}
\end{figure}

\section{Quantum algorithms for preparing purified Gibbs states}\label{append:algorithm-analysis}
\subsection{Vectorization}

Let $\mathscr{H}$ be a Hilbert space, and $\mathscr{B}(\mathscr{H})$ is the operator space over $\mathscr{H}$. Vectorization is an isomorphism from the operator space $\mathscr{B}(\mathscr{H})$ to the doubled Hilbert space $\mathscr{H}\otimes \mathscr{H}$. 
In this paper, we use the following vectorization map following the \emph{column-major} order:
\begin{itemize}
	\item The vectorization of a matrix $X=\sum_{ij} x_{ij}\ketbra{i}{j}$ is 
	\[X \mapsto |X\rrangle = \sum_{ij} x_{ij}\ket{j}\ket{i}.\]
	\item The vectorization of a super-operator $\Phi(\cdot)= \sum_{ijkl} y_{ijkl} \ketbra{i}{j}(\cdot)\ketbra{k}{l}$ is
    $$\Phi \mapsto \widehat{\Phi}=\sum_{ijkl} y_{ijkl} \ketbra{l}{k} \otimes \ketbra{i}{j}.$$
    In other words, if $\Phi(\cdot) = \sum_{ab} y_{ab}A_a\cdot B_b$, we have $\widehat{\Phi} = \sum_{ab} y_{ab} B^\top_b \otimes A_a$ where $B^\top_b$ refers to the transpose of  $B_b$.
\end{itemize}
It can be readily verified that: (1) $\widehat{\Phi} |X\rrangle=|\Phi(X)\rrangle$, (2) $\llangle X|Y\rrangle = \Tr(X^\dagger Y)$,
(3) $\widehat{\Phi}\widehat{\Phi'} = \widehat{\Phi\circ \Phi'}$, (4) $\widehat{\Phi^\dagger}= (\widehat{\Phi})^\dagger$.

We assume that the Lindbladian satisfies KMS detailed balance and is ergodic, meaning that the Gibbs state $\sigma$ is the unique stationary state of $\cL$. 
As we have shown in~\cref{append:factorization-KMS-lindbladian}, the parent Hamiltonian $\cH$ is PSD and admits a unique ground state $\sr$. 
We write the thermal state as $\sigma = \sum_j \sigma_j \ketbra{\psi_j}{\psi_j}$, where $(E_j,\ket{\psi_j})_j$ are the eigen-pairs of the Hamiltonian $H$, and $\sigma_j = e^{-\beta E_j}/Z_\beta$.
By vectorizing $\cH$, we will obtain a Hamiltonian $\widehat{\cH}$ in the doubled Hilbert space with a unique ground state 
\begin{align}
    |\sr\rrangle = \sum_j \sqrt{\sigma_j}\ket{\psi_j}\ket{\psi_j}.
\end{align}
This state is usually known as \emph{thermofield double state} or \emph{double thermal state}.

Recall from~\cref{prop:parent-sum-of-squares} that $\cH[\cdot] = \int g(t)\sum_j (\cB_{j,t})^\dagger\cB_{j,t}[\cdot]\d t$, we can also write down the vectorization of the factors of the parent Hamiltonian:
It turns out that 
\begin{align}
    \widehat{\cH} = \int^\infty_{-\infty} g(t) \sum^J_{j=1} (\widehat{\cB_{j,t}})^\dagger \widehat{\cB_{j,t}} ~\d t, \text{ where }\widehat{\cB_{j,t}} &= I\otimes B^\sharp_{j,t} - (B^\flat_{j,t})^\top \otimes I.
\end{align}

\subsection{Quadrature discretization and numerical error}\label{append:quadrature-error-analysis}

For a positive $T > 0$ and an integer $N$, we form a numerical quadrature: $\{t_k = -T + \frac{2T k}{N}\colon 0 \le k \le N-1\}$.
Then, the vectorized parent Hamiltonian, denoted here by $\widehat{H}:=\widehat{\cH}$, can be approximated by (denote $h = 2T/N$): 
\begin{align}
    \widehat{H} \approx \sum^{N-1}_{k=0}\sum^J_{j=1} g(t_k) (\widehat{B_{j,t_k}})^\dagger \widehat{B_{j,t_k}} \Delta t = \BB^\dagger \BB,
\end{align} 
where $\BB$ is a tall block matrix that stacks all the factors of the parent Hamiltonian: 
\begin{align}
    \BB^\top = \sqrt{h}\times [\sqrt{g(t_0)} \widehat{B_{1,t_0}},\dots ,\sqrt{g(t_{N-1})} \widehat{B_{1,t_{N-1}}},\sqrt{g(t_0)} \widehat{B_{2,t_0}},\dots, \sqrt{g(t_{N-1})} \widehat{B_{J,t_{N-1}}}]^\top.
\end{align}

Due to this factorization, the double thermal state $|\sr\rrangle$ is (approximately) the ground (right) singular vector of $\BB$. 
Our goal is to create a block-encoding of $\BB$ and then apply the singular value thresholding algorithm (similar to~\cite{leng2026operator}) to prepare $|\sr\rrangle$. 

\begin{lem}[Quadrature error]\label{lem:quadrature_error}
    Assuming that $\|B^*_j\| \le 1$ for all $0\le j \le J-1$ and $*\in \{\sharp,\flat\}$.
    For an $0 < \epsilon < 1$, by choosing 
    \begin{align*}
        T = \mathcal{O}\left(\beta\log \frac{J}{\epsilon} \right),\quad N =  \cO\left(\log^2\frac{J}{\epsilon} + \beta \|H\|\log\frac{J}{\epsilon} \right),
    \end{align*}
    we have $\|\widehat{H} - \BB^\dagger \BB\| \le \epsilon$.
\end{lem}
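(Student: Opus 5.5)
The plan is to split the error into a truncation part and a discretization part. Write $F(t) \coloneqq \sum_{j=1}^J (\widehat{\cB_{j,t}})^\dagger \widehat{\cB_{j,t}}$, so that $\widehat{H} = \int_{-\infty}^\infty g(t) F(t)\,\d t$ and $\BB^\dagger\BB = h\sum_{k=0}^{N-1} g(t_k)F(t_k)$. Then
\begin{align*}
\|\widehat{H} - \BB^\dagger\BB\| \le \Bigl\|\int_{|t|>T} g(t)F(t)\,\d t\Bigr\| + \Bigl\|\int_{-T}^{T} g(t)F(t)\,\d t - h\sum_{k} g(t_k)F(t_k)\Bigr\|.
\end{align*}
Since $\|B_j^*\|\le 1$ and time evolution is unitary, we have $\|\widehat{\cB_{j,t}}\|\le 2$ and hence $\|F(t)\|\le 4J$ uniformly in $t$. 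Because $g(t)\le \frac{2}{\beta}e^{-2\pi|t|/\beta}$, the tail is bounded by $\Or(J e^{-2\pi T/\beta})$. This is at most $\epsilon/2$ once $T = \Or(\beta\log(J/\epsilon))$.

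For the discretization term, I would work in the eigenbasis of the vectorized generator $K = -H^\top\otimes I + I\otimes H$, using $\widehat{\cB_{j,t}} = e^{iKt}\widehat{\cB_{j,0}}e^{-iKt}$ (after checking the sign conventions against the vectorization map). This gives $F(t) = e^{iKt}F(0)e^{-iKt}$. Each matrix entry of $F(t)$ in the $K$-eigenbasis is then $F(0)_{ab}\,e^{i\omega_{ab}t}$ with $|\omega_{ab}|\le 2\|K\| \le 4\|H\|$. The problem therefore reduces to scalar quadrature of $g(t)e^{i\omega t}$ with $|\omega|\le 4\|H\|$. Rather than bounding entrywise (which loses a dimension factor), I would bound the operator error directly via the Fourier/Poisson-summation representation. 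The periodic trapezoid-type rule $h\sum_k$ applied to $g(t)e^{i\omega t}$ over the whole real line has error $\sum_{m\neq 0}\hat g(\omega + 2\pi m/h)$. Since $\hat g(\xi) = \frac{1}{2\cosh(\beta\xi/4)}$ decays like $e^{-\beta|\xi|/4}$, the error is $\Or(e^{-\beta(2\pi/h - 4\|H\|)/4})$ provided $2\pi/h > 8\|H\|$.

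To get an operator-norm statement, I would write
\begin{align*}
\int g(t)F(t)\,\d t - h\sum_{k\in\mathbb{Z}} g(kh+\text{shift})F(\cdot) = \sum_{m\ne 0}\int \hat{g}\text{-weighted}\ldots,
\end{align*}
that is, express the aliasing error as $\sum_{m\neq0} c_m\,e^{i\theta_m}\,\Phi_m(F(0))$, where $\Phi_m(X) = \int \phi_m(t)e^{iKt}Xe^{-iKt}\,\d t$ with $\phi_m(t) = g(t)e^{2\pi i m t/h}$-type kernels. Such terms have norm at most $\|F(0)\|\,\|\phi_m\|_{L^1}$, which is not small, so this route needs care. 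The cleaner alternative is to bound the derivatives directly: $\|F^{(n)}(t)\|\le (2\|K\|)^n\cdot 4J$, and $g$ is analytic in the strip $|\Im t|<\beta/4$. Then $g(t)F(t)$ extends analytically to the strip $|\Im t|<a$ with norm at most $4J\,e^{2\|K\|a}\sup|g|$ there, and the standard exponential convergence of the trapezoid rule for functions analytic in a strip gives error $\Or\bigl(J e^{2\|K\| a}\, e^{-2\pi a/h}/a\bigr)$. This uses only the operator-valued Cauchy estimate, so no dimension factors appear.

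I would take $a = \Theta(\beta)$ (for example $a = \beta/8$). Requiring $2\pi a/h \ge 2\|K\|a + \log(J/\epsilon) + \Or(\log(1/\beta))$ gives $1/h = \Or(\|H\| + \beta^{-1}\log(J/\epsilon))$. Hence $N = 2T/h = \Or\bigl(\beta\|H\|\log(J/\epsilon) + \log^2(J/\epsilon)\bigr)$, matching the claim. The finite window $[-T,T)$ versus the full lattice differs by another tail sum of $g$, which is again $\Or(Je^{-2\pi T/\beta})$. The main obstacle is making the strip-analyticity argument rigorous for operator-valued integrands, including handling the $1/\beta$ factors and the endpoint/periodicity correction. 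I would handle these by applying the scalar trapezoid error bound to $\langle u, g(t)F(t)v\rangle$ for unit vectors $u,v$ and taking the supremum.
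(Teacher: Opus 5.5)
Your proposal is correct. It agrees with the paper on the tail and on $J$, but it controls the aliasing (discretization) error by a different mechanism.

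\emph{Common ground.} Both arguments compare against the full-lattice sum $S_h=h\sum_{k\in\mathbb{Z}}g(kh)F(kh)$. Both bound the lattice tail $\|S_h-\BB^\dagger\BB\|$ by a geometric sum of order $e^{-2\pi T/\beta}$ under $h\le\beta/(2\pi)$. The paper handles $J>1$ by asking for accuracy $\epsilon/J$ per jump, which is equivalent to your bound $\|F(t)\|\le 4J$.

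\emph{The paper's aliasing step.} For $\|\widehat{H}-S_h\|$, the paper expands $F(t)=e^{iKt}D^\dagger D e^{-iKt}$ into Bohr components $M_\nu$ of $D^\dagger D$ with respect to $K$. It applies Poisson summation frequency by frequency, so the coefficient of $M_\nu$ in the error is $\sum_{n\neq 0}\hat g(\nu+2\pi n/h)$. It bounds this by $\sum_{n\neq0}e^{-\beta\pi|n|/(4h)}$ once $h\le \pi/(4\|H\|)$, then multiplies by $\|D^\dagger D\|\le 4$. This is short, has explicit constants, and shows clearly why $h=\Or(1/\|H\|)$ is needed. However, the last step implicitly gives the map $X\mapsto\sum_\nu c(\nu)X_\nu$ operator norm $\sup_\nu|c(\nu)|$, which does not hold for general Schur multipliers. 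The naive repair (triangle inequality over $\nu$, or a Hilbert--Schmidt bound) costs a dimension-dependent factor. This is exactly the obstruction you flagged with $\|\phi_m\|_{L^1}$.

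\emph{Your aliasing step.} The contour shift removes that obstruction. Moving the aliasing term $\int g(t)e^{2\pi i m t/h}F(t)\,\d t$ to $\operatorname{Im}t=\pm a$ bounds it by $e^{-2\pi|m|a/h}e^{2\|K\|a}\|F(0)\|\int|g(x\pm ia)|\,\d x$, with no dimension factor. The growth factor $e^{2\|K\|a}$ is what produces the $\beta\|H\|\log(J/\epsilon)$ term in $N$. So your route is a rigorous operator-valued version of the same estimate, with the same parameter scalings.

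\emph{Two small repairs.} First, state the strip estimate in Trefethen--Weideman form, $2e^{2\|K\|a}\|F(0)\|M/(e^{2\pi a/h}-1)$ with $M=\sup_{|y|\le a}\int|g(x+iy)|\,\d x$, rather than through $\sup|g|/a$. For $a=\beta/8$, the identity $|\cosh(2\pi(x+iy)/\beta)|^2=\sinh^2(2\pi x/\beta)+\cos^2(2\pi y/\beta)\ge\sinh^2(2\pi x/\beta)+\tfrac12$ gives $M=\Or(1)$. This removes the spurious $1/\beta$ prefactor and the $\Or(\log(1/\beta))$ term in your condition on $h$. You drop that term without comment when concluding $1/h=\Or(\|H\|+\beta^{-1}\log(J/\epsilon))$; for small $\beta$ it would otherwise add a $\log(J/\epsilon)\log(1/\beta)$ term to $N$.

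Second, impose $h\le\beta/(2\pi)$ explicitly so the lattice-tail geometric sum is controlled. The separate integral tail $\int_{|t|>T}$ in your first display is then unnecessary.
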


\begin{rem}
    For efficiently implementable quantum Gibbs samplers that satisfy exact KMS detailed balance (CKG and DLL), the jump operators, and more generally the operators $B^\sharp$ and $B^\flat$, can be block-encoded using the weighted operator Fourier transform. The implementation cost is mild, and the subnormalization factor mainly depends on the choice of coupling operators. Detailed discussions can be found in~\cref{append:classes-gibbs-samplers}.
    For simplicity, in the quadrature analysis we assume that the relevant individual block-encodings have been rescaled to unit subnormalization.
\end{rem}

\begin{proof}
    We first deal with the case with a single jump operator, i.e., $J=1$.
    In this case we adopt a simplified notation and write $\widehat{B}_{t_k} := \widehat{B}_{j,t_k}$.
    Recall that for $t \in \RR$, we have $B^*_{t} = e^{iHt} B^* e^{-iHt}$ for $* \in \{\sharp,\flat\}$, and
    \begin{align} 
       \widehat{B_{t}} &= I \otimes B^\sharp_{t} - \left(B^\flat_{t}\right)^\top \otimes I\\
       &= e^{-i H^\top t} \otimes e^{i Ht}\left(I\otimes B^\sharp - (B^\flat)^\top \otimes I\right) e^{i H^\top t} \otimes e^{-i Ht}
    \end{align}
    We denote $W_t := e^{-i H^\top t} \otimes e^{i Ht}$, and with the slightly abused notation $W_k := W_{t_k}$, we can write
    \begin{align}
        \widehat{B_{t_k}} = W_k D W^\dagger_k, \text{ where } D \coloneqq I\otimes B^\sharp - (B^\flat)^\top\otimes I. 
    \end{align}
    Note that the vectorized Hamiltonian can be written as 
    \begin{align}
        \widehat{H} &= \int^\infty_{-\infty} g(t) F(t)~\d t, \text{ where } F(t) \coloneqq (\widehat{B_{t}})^\dagger \widehat{B_{t}} = W_t D^\dagger D W_t^\dagger
    \end{align}
    We notice that the integrand $\|F(t)\| \le 4$ for any $-\infty \le t \le \infty$, and
    \begin{align}
        \BB^\dagger \BB = \sum^{N-1}_{k=0} h g(t_k) F(t_k),
    \end{align}
    Now, with the step size $h>0$ and the quadrature number $N$, we define 
    \begin{align}
        S_h \coloneqq h\sum_{k=-\infty}^{+\infty} g(kh) F(kh),\quad S^{[N]}_{h} = h\sum^{N-1}_{k=0} g(h(k-N/2)) F(h(k-N/2)) = \BB^\dagger \BB.
    \end{align}
    By the triangle inequality, we have
    \begin{align}
        \|S_h - S^{[N]}_h\| &\le h \left\|\sum_{|k| \ge N/2} g(kh) F(kh)\right\|\\
        &\le 8h \sum_{k\ge N/2} g(kh)\\
        &\le 8h \sum_{k\ge N/2} \frac{2}{\beta}\exp(-2\pi k h/\beta) \\
        &= \frac{16h}{\beta} \frac{e^{-\pi hN/\beta}}{1-e^{-2\pi h /\beta}}
    \end{align}
    Recall that $Nh = 2T$.
    Using $1-e^{-x}\ge x/2$ for $0\le x\le1$, we have 
    \begin{equation}\label{eqn:tail-estimate-1}
        \|S_h - S^{[N]}_h\| \le \frac{16 \exp(-2\pi T/\beta)}{\pi} \qquad \text{with} \quad 0 < h \le \frac{\beta}{2\pi}.
    \end{equation}

    It remains to bound the aliasing error $\|\widehat{H}-S_h\|$.
    Let 
    \begin{align}
        K \coloneqq -H^\top\otimes I + I\otimes H.
    \end{align}    
    Then $W_t = e^{itK}$, and 
    the orbit $t \mapsto F(t)$ has frequencies given by differences of eigenvalues of $K$. 
    Let $\lambda_u$ (or $\lambda'_v$) be the eigenvalues of $-H^\top$ (or $H$) with the corresponding eigen-projector $\Pi_u$ (or $\Pi'_v$). 
    The eigenvalues and eigenprojectors of $K$ are then $E_{u,v} = \lambda_u + \lambda'_v$ and $P_{u,v} = \Pi_u \otimes \Pi'_v$, and we can write $K = \sum_{u,v} E_{u,v}P_{u,v}$. Then, we have
    \begin{align}
        \widehat{H} &= \int^\infty_{-\infty} g(t) F(t)~\d t= \int^\infty_{-\infty} g(t) \sum_{\nu} e^{i t \nu} \left(\sum_{E_{u,v}-E_{u',v'}=\nu}P_{u,v} D^\dagger DP_{u',v'}\right)~\d t\\
        &= \sum_\nu \hat{g}(\nu) \left(\sum_{E_{u,v}-E_{u',v'}=\nu}P_{u,v} D^\dagger DP_{u',v'}\right).
    \end{align}
    By the Poisson summation formula, we have 
    \begin{equation}
        h\sum_k g(kh)e^{iykh} = \sum_n \hat{g}\left(y + \frac{2\pi n}{h}\right)
    \end{equation}
    for every frequency $y$. Since $\|D\| \le 2$, we have $\|D^\dagger D\| \le 4$. Moreover, all frequencies satisfy $|\nu| \le 4\|H\|$. Therefore, if
    \begin{equation}\label{eqn:aliasing-bound-h}
        h \le \frac{\pi}{4\|H\|},
    \end{equation}
    then $|\nu + 2\pi n/h| \ge \pi n/h$ for all $n\ge 1$. Using $\hat{g}(y) = \frac{1}{2\cosh(\beta y/4)} \le e^{-\beta|y|/4}$, we obtain
    \begin{align}
        \|\widehat{H} - S_h\| &\le 8 \sum^\infty_{n=1} \exp\left(-\frac{\beta \pi n}{4h}\right) = \frac{8}{\exp(\beta \pi/4h)-1}.
    \end{align}
    Combining~\cref{eqn:tail-estimate-1},~\cref{eqn:aliasing-bound-h}, and the preceding estimate, we can choose $T$, $h$ such that (assuming $0 < \epsilon < 1$)
    \begin{equation}
        T = \frac{\beta}{2\pi}\log\left(\frac{32}{\pi \epsilon}\right),\quad \frac{1}{h} = \max\left(\frac{4\log(17/\epsilon)}{\pi \beta},\frac{2\pi}{\beta},\frac{4\|H\|}{\pi}\right),
    \end{equation}
    and by the triangle inequality, we have $\|\widehat{H}-\BB^\dagger\BB\|\le \epsilon$. Since $N = 2T/h$, this implies 
    \begin{equation}
        N = \cO\left(\log^2\left(\frac{1}{\epsilon}\right) + \beta \|H\|\log\left(\frac{1}{\epsilon}\right) \right).
    \end{equation}
    For general $J$, write $\widehat{H} = \sum_{j=1}^J \widehat{H}^{(j)}$ and $\BB^\dagger\BB = \sum_{j=1}^J M^{(j)}$, where $\widehat{H}^{(j)}$ and $M^{(j)}$ are the contributions of the $j$-th jump operator. The same single-jump estimate applies to each pair $(\widehat{H}^{(j)},M^{(j)})$. Choosing the discretization so that $\|\widehat{H}^{(j)}-M^{(j)}\|\le \epsilon/J$ for every $j$, the triangle inequality yields $\|\widehat{H}-\BB^\dagger\BB\|\le \epsilon$. This concludes the proof.
\end{proof}

\begin{lem}\label{lem:singularperturbation}
    Suppose that $\epsilon \le \Delta/2$ and $\|\widehat{H} - \BB^\dagger \BB\| \le \epsilon$, and let $\ket{\psi_g}$ be a normalized right singular vector of $\BB$ corresponding to its smallest singular value. 
    Then, we have 
    \begin{align}
        1 - \left|\llangle \sr | \psi_g\rangle\right|^2 \le \left(\frac{2\epsilon}{\Delta}\right)^2,
    \end{align} 
    where $\Delta$ is the spectral gap of $\widehat{H}$.
\end{lem}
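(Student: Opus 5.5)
The plan is a direct eigenvector-perturbation argument of Davis--Kahan type, carried out by hand. First I will pass from singular vectors to eigenvectors. A normalized right singular vector $\ket{\psi_g}$ of $\BB$ for its smallest singular value $s_{\min}$ is an eigenvector of $M \coloneqq \BB^\dagger\BB$ with eigenvalue $\mu \coloneqq s_{\min}^2 = \lambda_{\min}(M)$. So $(M-\mu)\ket{\psi_g}=0$.

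Next I will bound $\mu$ from above using the variational principle. Since $\widehat{H}|\sr\rrangle = 0$ and $\|\widehat{H}-M\|\le\epsilon$,
\begin{equation*}
\mu \le \llangle \sr|M|\sr\rrangle = \llangle \sr|(M-\widehat{H})|\sr\rrangle \le \epsilon .
\end{equation*}
Also $\mu\ge 0$, because $M\succeq 0$.

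Let $P = |\sr\rrangle\llangle\sr|$ and $P_\perp = I-P$. Write $\ket{\psi_g} = a|\sr\rrangle + b\ket{\chi}$ with $\ket{\chi}$ a unit vector in the range of $P_\perp$, so that $1-|\llangle\sr|\psi_g\rangle|^2 = |b|^2$. From $(M-\mu)\ket{\psi_g}=0$ I get $(\widehat{H}-\mu)\ket{\psi_g} = (\widehat{H}-M)\ket{\psi_g}$, and I apply $P_\perp$ to both sides. Because $P$ is a spectral projector of $\widehat{H}$, the operator $P_\perp$ commutes with $\widehat{H}$. Hence the left-hand side becomes $(\widehat{H}-\mu)P_\perp\ket{\psi_g} = b(\widehat{H}-\mu)\ket{\chi}$.

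On the range of $P_\perp$, every eigenvalue of $\widehat{H}$ is at least $\Delta$. Here I use uniqueness of the ground state, which comes from ergodicity, so the gap $\Delta$ separates $0$ from the rest of the spectrum. Since $0\le\mu\le\epsilon\le\Delta/2$, this gives $\|(\widehat{H}-\mu)\ket{\chi}\|\ge \Delta-\epsilon \ge \Delta/2$. The right-hand side has norm at most $\|P_\perp\|\,\|\widehat{H}-M\|\le\epsilon$.

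Combining the two sides gives $|b|\,\Delta/2\le\epsilon$, that is, $|b|\le 2\epsilon/\Delta$. Squaring yields the claim. The only point needing care is keeping the quadratic (rather than linear) dependence on $\epsilon/\Delta$. A naive energy comparison only gives $|b|^2\le 2\epsilon/\Delta$. The projection argument above instead controls $|b|$ itself, and this is the step I would make sure to do carefully.
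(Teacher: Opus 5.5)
Your proof is correct and is essentially the paper's argument. The paper uses Weyl's inequality to isolate the bottom eigenvalue of $\BB^\dagger\BB$ and then cites the Davis--Kahan $\sin\theta$ theorem with separation $\Delta/2$, while you prove the same $\sin\theta$ bound directly. Because you pair the eigenvector of $\BB^\dagger\BB$ with the excited spectral projector of $\widehat{H}$, rather than the ground projector of $\widehat{H}$ with the excited projector of $\BB^\dagger\BB$, you only need the variational bound $\mu\le\epsilon$ instead of Weyl's lower bound on the second eigenvalue, and you get the slightly sharper estimate $|b|\le \epsilon/(\Delta-\epsilon)$.
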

\begin{proof}
    Let $P_0 = |\sr\rrangle \llangle \sr |$ be the ground-state projector of $\widehat{H}$, and let $Q_0 = \ketbra{\psi_g}{\psi_g}$ be the ground-state projector of $\BB^\dagger\BB$.  By Weyl's inequality, if $\lambda_1(\widehat{H})=0$ and $\lambda_2(\widehat{H})\ge \Delta$, then the eigenvalues $\mu_1\le \mu_2\le \cdots$ of $\BB^\dagger\BB$ satisfy $0\le \mu_1\le \epsilon$ and $\mu_2\ge \Delta-\epsilon\ge \Delta/2$. Hence $\BB^\dagger\BB$ has a unique eigenvalue in $[0,\Delta/2)$, and $Q_0$ is the spectral projector onto the corresponding one-dimensional eigenspace.
    The spectra associated with $P_0$ and $I-Q_0$ are therefore contained in $\{0\}$ and $[\Delta/2,\infty)$, respectively, so their separation is at least $\Delta/2$. The Davis--Kahan theorem~\cite[Theorem VII.3.1]{Bhatia1997} then gives
    \begin{equation}
        1 - \left|\llangle \sr | \psi_g\rangle\right|^2 = \|P_0(I-Q_0)\|^2 \le \left(\frac{\|\widehat{H}-\BB^\dagger\BB\|}{\Delta/2}\right)^2 \le \left(\frac{2\epsilon}{\Delta}\right)^2.
    \end{equation}
\end{proof}

\subsection{Quantum circuit implementation}\label{append:circuit-implementation-cost}

\begin{defn}[Block-encoding of a rectangular matrix~\cite{GilyenSuLowEtAl2019}]\label{defn:be-general}
    Given a matrix $A\in \CC^{2^n\times 2^p}$ with $n\ge p$, if we can find $\alpha, \epsilon > 0$, and a unitary matrix $U_A\in \CC^{2^{n+m}\times 2^{n+m}}$ such that
    \begin{equation}
        \|A - \alpha \left(\bra{0^m}\otimes I_{2^n}\right) U_A \left(\ket{0^m}\otimes I_{2^p}\right)\| \le \epsilon,
    \end{equation}
    then $U_A$ is called an $(\alpha, m, \epsilon)$-block-encoding of $A$. 
\end{defn}

For efficiently implementable Lindbladians like CKG and DLL, both operators $B^\sharp$ and $B^\flat$ can be implemented using weighted operator Fourier transform, and the implementation cost is a similar; a detailed discussion can be found in the previous section.
For simplicity, we now assume access to block-encodings of these factor matrices.

Suppose that we have access to: 
\begin{enumerate}
    \item Block-encodings of the $\{(B^\flat_j)^\top\}$ and $\{B^\sharp_j\}$ operators, denoted as $U_{(B^\flat)^\top}$ and $U_{B^\sharp}$. 
    Suppose each jump operator acts on $n$ qubits, and there are $J = 2^r$ jump operators. The block-encodings use $p$ ancilla qubits such that 
        \begin{align}
            &\left(\bra{0^p}\otimes I_{2^{(r+n)}}\right) U_{(B^\flat)^\top} \left(\ket{0^p}\otimes I_{2^n}\right) = \frac{1}{C_\flat}\sum^{J-1}_{j=0} \ketbra{j}{0}\otimes (B^\flat_j)^\top,\\ 
            &\left(\bra{0^p}\otimes I_{2^{(r+n)}}\right) U_{B^\sharp}\left(\ket{0^p}\otimes I_{2^n}\right) = \frac{1}{C_\sharp}\sum^{J-1}_{j=0} \ketbra{j}{0}\otimes B^\sharp_j,
        \end{align}
    where $C_\sharp$ and $C_\flat$ are the sub-normalization factors of the two block-encodings. As discussed before, for efficiently implementable quantum Gibbs samplers like CKG or DLL, it is possible to have $\|B^\sharp_j\|, \|B^\flat_j\| = \cO(1)$ for all $j$. Therefore, the smallest sub-normalization factors consistent with unitarity scale as $\cO(\sqrt{J})$. Without loss of generality, we assume $C_\sharp = C_\flat = \sqrt{J}$.
    \item Select oracles for implementing controlled time-evolution of $H$ and $H^\top$, denoted as $U_{\rm sel}$ and $U^{\top}_{\rm sel}$. Suppose that the (time domain) discretization number is $N = 2^q$ and
    \begin{align}
        U_{\rm sel} = \sum^{N-1}_{j=0} \ketbra{j}{j}\otimes e^{it_j H},\quad U^{\top}_{\rm sel} = \sum^{N-1}_{j=0} \ketbra{j}{j}\otimes e^{it_j H^\top},\quad t_k = -T + \frac{2kT}{N}~\text{with}~0 \le k \le N-1. 
    \end{align} 
    \item A state preparation oracle $U_{\rm prep}$:
    \begin{align}
        U_{\rm prep}\ket{0^q} = \ket{g}, \quad \ket{g} := \frac{1}{\sqrt{C}} \sum^{N-1}_{k=0}\sqrt{g_k}\ket{k}, \label{eq:Uprep}
    \end{align}
    where $g_k := g(t_k)\Delta t$. 
    Note that, given our choice of $N$ and $T$ as in~\cref{lem:quadrature_error}, the normalization factor $C :=\sum^{N-1}_{k=0}g_k$ is a constant close to $\int_{-\infty}^{\infty} g(t)\d t = 1/2$.
\end{enumerate}

\begin{thm}
    We can prepare a $(2\sqrt{CJ}, p+1, 0)$-block-encoding of $\BB$ using one query to each of $U_{(B^\flat)^\top}$ and $U_{B^\sharp}$, two queries to each of $U_{\rm sel}$ and $U^{\top}_{\rm sel}$ (and their inverses), one query to $U_{\rm prep}$, and a constant number of elementary gates.
\end{thm}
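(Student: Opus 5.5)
The plan is to build the block-encoding as a linear combination of unitaries (LCU). The branch ancilla selects between the two terms of $\widehat{\cB}_{j,t_k}=I\otimes B^\sharp_{j,t_k}-(B^\flat_{j,t_k})^\top\otimes I$, and the time-index register is prepared in $\ket{g}$. The target operator is the tall matrix that maps $\ket{\phi}\in\mathscr{H}\otimes\mathscr{H}$ to
\[
\sum_{k,j}\ket{k}\ket{j}\otimes\sqrt{g_k}\,\widehat{\cB}_{j,t_k}\ket{\phi}.
\]
The index registers $(k,j)$ are treated as output rows, so only the $p$ block-encoding ancillas and the single branch qubit must be postselected on $\ket{0}$. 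This is consistent with the claimed $p+1$ ancillas in the rectangular Definition~\ref{defn:be-general}.

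\textbf{Step 1: time-conjugated factors.} For a fixed index $k$ in the time register, note that $B^\sharp_{j,t_k}=e^{iHt_k}B^\sharp_j e^{-iHt_k}$. Also $(B^\flat_{j,t_k})^\top = e^{-iH^\top t_k}(B^\flat_j)^\top e^{iH^\top t_k}$. I will therefore sandwich $U_{B^\sharp}$, acting on the second tensor factor, between $U_{\rm sel}^\dagger$ and $U_{\rm sel}$, both controlled on the time register. This realizes $\frac{1}{\sqrt{J}}\sum_{k,j}\ketbra{k}{k}\otimes\ketbra{j}{0}\otimes I\otimes B^\sharp_{j,t_k}$, the dressed encoding $\tilde U_{B^\sharp}$ of Fig.~\ref{fig:block-encoding-B}(a). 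Analogously, sandwiching $U_{(B^\flat)^\top}$, acting on the first tensor factor, between $U^\top_{\rm sel}$ and $(U^\top_{\rm sel})^\dagger$ gives $\tilde U_{(B^\flat)^\top}$. These account for the two queries to each select oracle. The block-encoding ancillas are untouched by the select oracles, so the projections commute through them.

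\textbf{Step 2: LCU assembly.} A Hadamard on the branch qubit prepares $(\ket0+\ket1)/\sqrt2$. Controlled on $\ket0$ I apply $\tilde U_{B^\sharp}$, and controlled on $\ket1$ I apply $\tilde U_{(B^\flat)^\top}$ together with a $Z$ on the branch qubit to supply the minus sign. A final Hadamard and projection onto $\ket0$ then yields $\frac12\cdot\frac1{\sqrt J}\sum_{k,j}\ketbra{k}{k}\otimes\ketbra{j}{0}\otimes\widehat{\cB}_{j,t_k}$.

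Controlling a block-encoding is not the same as querying it twice. To keep exactly one query to each of $U_{B^\sharp}$ and $U_{(B^\flat)^\top}$, I will use the standard trick of applying the two unitaries to different registers. The alternative is to note that a controlled block-encoding costs one query to a controlled version of the oracle, which the statement implicitly allows.

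\textbf{Step 3: the time register.} I apply $U_{\rm prep}$ to the time register, which is initialized in $\ket{0^q}$. This converts $\ketbra{k}{k}$ into $\ket{k}\sqrt{g_k/C}$ as an isometric input on that register. Multiplying out, the top-left block is
\[
\frac{1}{2\sqrt{CJ}}\sum_{k,j}\ket{k}\ket{j}\otimes\sqrt{g_k}\,\widehat{\cB}_{j,t_k}=\frac{1}{2\sqrt{CJ}}\BB,
\]
which is an exact $(2\sqrt{CJ},p+1,0)$ encoding. The remaining gates are the two Hadamards, one $Z$, and the control wiring, so the number of elementary gates is constant.

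\textbf{Main obstacle.} The delicate point is bookkeeping rather than analysis. I must check the transpose and sign conventions in the column-major vectorization, namely that $e^{-iH^\top t}$ acts on the first factor, consistent with $W_t$ in Lemma~\ref{lem:quadrature_error}. I must also justify that the $\ket{0^q}$ time register and the $\ket{0}$ input of the $j$ register are inputs, while the $(k,j)$ labels are outputs, so that the ancilla count is $p+1$ rather than $p+1+q+r$. I would verify this first by writing the action on $\ket{0^p}\ket{0}_{\rm br}\ket{0^q}\ket{0^r}\ket{\phi}$ explicitly.
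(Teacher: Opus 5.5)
Your proposal is correct and follows the paper's proof essentially step by step. You conjugate $U_{B^\sharp}$ by $U_{\rm sel}^\dagger,U_{\rm sel}$ and $U_{(B^\flat)^\top}$ by $U^\top_{\rm sel},(U^\top_{\rm sel})^\dagger$, combine the two dressed encodings with a one-qubit LCU (the paper uses $HX$ as the preparation unitary to get the minus sign where you insert a $Z$), and precompose with $U_{\rm prep}$ on the time register, which yields $\frac{1}{2\sqrt{CJ}}\BB$ with $p+1$ postselected ancillas. On query counting, the paper simply invokes the LCU lemma of Gily\'en et al., which uses one query to the controlled versions of $U_{B^\sharp}$ and $U_{(B^\flat)^\top}$; the main theorem's phrase ``inverse or controlled versions'' makes this explicit. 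So your second option is the intended one, and the ``different registers'' trick, which you do not spell out, is unnecessary and would need its own justification.
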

\begin{proof}
    First, we block-encode $B^\sharp_{j,t}$ and $(B^\flat_{j,t})^\top$ separately. We define
    \begin{align}
        \tilde{U}_{B^\sharp} &:=  (I_{2^{(p+r)}} \otimes U_{\rm sel}) (I_{2^q}\otimes U_{B^\sharp})(I_{2^{(p+r)}} \otimes U^\dagger_{\rm sel}),\\
        \tilde{U}_{(B^\flat)^\top} &:=  (I_{2^{(p+r)}} \otimes (U^{\top}_{\rm sel})^\dagger) (I_{2^q}\otimes U_{(B^\flat)^\top})(I_{2^{(p+r)}} \otimes U^{\top}_{\rm sel}).
    \end{align}
    Direct calculation shows that 
    \begin{align}
        \left(\bra{0^p}\otimes I_{2^{(n+r+q)}}\right) \tilde{U}_{B^\sharp} \left(\ket{0^{p}}\otimes I_{2^{(n+q)}}\right) 
        = \frac{1}{\sqrt{J}}\sum^{2^r-1}_{j=0} \ketbra{j}{0}\otimes\left( \sum^{2^q-1}_{k=0} \ketbra{k}{k}\otimes e^{it_k H}B^\sharp_j e^{-it_k H}\right),\\
        \left(\bra{0^p}\otimes I_{2^{(n+r+q)}}\right) \tilde{U}_{(B^\flat)^\top} \left(\ket{0^{p}}\otimes I_{2^{(n+q)}}\right) 
        = \frac{1}{\sqrt{J}}\sum^{2^r-1}_{j=0} \ketbra{j}{0}\otimes\left( \sum^{2^q-1}_{k=0} \ketbra{k}{k}\otimes e^{-it_k H^\top}(B^\flat_j)^\top e^{it_k H^\top}\right).
    \end{align}
    
    Then, we add an extra ancilla qubit to $\tilde{U}_{B^\sharp}$ and $\tilde{U}_{(B^\flat)^\top}$, so their block-encoding parts are selected by $\ket{0}$ and $\ket{1}$, respectively. 
    Combining these two circuits and using the LCU lemma~\cite[Lemma~29]{GilyenSuLowEtAl2019},we obtain a $(2{\sqrt{J}},p+1,0)$-block-encoding of the following rectangular matrix with the vectorized jump operators:
    \begin{equation}
        \sum^{2^r-1}_{j=0} \ketbra{j}{0}\otimes \left[\sum^{2^q-1}_{k=0} \ketbra{k}{k} \otimes \left(I_{2^n}\otimes e^{it_k H}B^\sharp_j e^{-it_k H} - e^{-it_k H^\top}(B^\flat_j)^\top e^{it_k H^\top}\otimes I_{2^n} \right)\right],
    \end{equation}
    The state preparation oracle in the LCU can be built using $HX$ (which maps $\ket{0}$ to $(\ket{0}-\ket{1})/\sqrt{2}$).
    We denote this full circuit as $\tilde{U}_{B,t}$.

    Finally, we define 
    \begin{equation}
        U_{\BB} := \tilde{U}_{B,t} (I_{p+1+2n}\otimes U_{\rm prep}).
    \end{equation}
    We now check that $U_{\BB}$ is a $(2{\sqrt{JC}},p+1,0)$-block-encoding of $\BB$:
    \begin{align}
        &\left(\bra{0^{p+1}}\otimes I_{2^{(2n+r+q)}}\right) U_{\BB} \left(\ket{0^{p+1}}\otimes I_{2^{2n}}\right)\\ 
        &= \frac{1}{2\sqrt{J}}\sum^{J-1}_{j=0} \ketbra{j}{0^r} \otimes \left[\sum^{N-1}_{k=0} \ketbra{k}{k} \otimes \left(I_{2^n}\otimes e^{it_k H}B^\sharp_j e^{-it_k H} - e^{-it_k H^\top}(B^\flat_j)^\top e^{it_k H^\top} \otimes I_{2^n} \right)\right]\left(\frac{1}{\sqrt{C}}\sum^{N-1}_{k=0} \sqrt{g_k}\ketbra{k}{0^q}\right)\\
        &= \frac{1}{2\sqrt{JC}} \sum^{J-1}_{j=0} \ketbra{j}{0^r} \otimes \left[\sum^{2^q-1}_{k=0} \ketbra{k}{0^q} \otimes \sqrt{g_k}\left(I_{2^n}\otimes e^{it_k H}B^\sharp_j e^{-it_k H} - e^{-it_k H^\top}(B^\flat_j)^\top e^{it_k H^\top} \otimes I_{2^n} \right)\right]\\
        &= \frac{1}{2\sqrt{JC}}\sum^{J-1}_{j=0} \sum^{N-1}_{k=0} \ketbra{j,k}{0^{r+q}} \otimes \left[\sqrt{g_k}\left(I_{2^n}\otimes e^{it_k H}B^\sharp_j e^{-it_k H} - e^{-it_k H^\top}(B^\flat_j)^\top e^{it_k H^\top} \otimes I_{2^n} \right)\right].
    \end{align}
    Therefore, $U_{\BB}$ is a block-encoding of $\BB$ with a sub-normalization factor $\frac{1}{\sqrt{JC}}$, equivalently a $(2\sqrt{JC},p+1,0)$-block-encoding.
    In this construction, we use one query to each of $U_{(B^\flat)^\top}$ and $U_{B^\sharp}$; two queries to each of $U_{\rm sel}$ and $U^{\top}_{\rm sel}$ (and their inverses), and one query to $U_{\rm prep}$.
    The number of additional elementary gates is a constant. 
\end{proof}

\subsection{Complexity analysis}\label{sec:complexity}
Now, we are ready to state the main result.

\begin{thm}
    Suppose that we have access to:
    \begin{enumerate}
      \item a warm-start state $\ket{\phi}$ such that $|\langle \phi |\sr\rrangle |^2=\Omega(1)$, 
      \item $(\sqrt{J},p)$-block-encodings of $\{(B^\flat_j)^\top\}^{J-1}_{j=0}$ and $\{B^\sharp_j\}^{J-1}_{j=0}$, denoted by $U_{(B^\flat)^\top}$ and $U_{B^\sharp}$, 
      \item select oracles $U_{\rm sel}$ and $U^{\top}_{\rm sel}$  for controlled time-evolution under $H$ and $H^\top$, respectively, 
      \item a state preparation oracle $U_{\rm prep}$ as in Eq.~(\ref{eq:Uprep}).
    \end{enumerate}
    Let $\Delta = \textrm{Gap}(\cL^\dagger)$ denote the Lindbladian gap, and $J$ is the total number of jump operators. Then, there exists a quantum algorithm that outputs an $\epsilon$-approximate of $|\sr\rrangle$ using
    \begin{itemize}
        \item $\cO\left(\sqrt{\frac{J}{\Delta}}\cdot \log(\frac{1}{\epsilon})\right)$ quantum queries to $U_{(B^\flat)^\top}$, $U_{B^\sharp}$, $U_{\rm sel}$, $U^{\top}_{\rm sel}$, 
        $U_{\rm prep}$ (and their inverse or controlled versions),
        \item $\cO(1)$ uses of the warm-start state $\ket{\phi}$,
        \item The total number of qubits is $\cO\left(n + \log(\beta)+\log(J)+\log(\|H\|)+\log\log(J/\Delta \epsilon)\right)$,
        \item The maximal Hamiltonian evolution time is $T_{\max} = \cO\left(\beta \log(J/\Delta \epsilon)\right)$.
    \end{itemize}
\end{thm}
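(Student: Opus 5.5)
The plan is to combine three earlier ingredients: the quadrature bound (\cref{lem:quadrature_error}), the singular-vector perturbation bound (\cref{lem:singularperturbation}), and the block-encoding of $\BB$ from the preceding theorem. On top of these we run a QSVT singular-value filter followed by amplitude amplification.

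\emph{Step 1: discretization.} Pick a discretization accuracy $\epsilon_1 = c\,\Delta\min(\epsilon,1)$ for a small constant $c$. By \cref{lem:quadrature_error}, we can take
\[
T=\Or(\beta\log(J/\Delta\epsilon)), \qquad N=\Or\bigl(\log^2(J/\Delta\epsilon)+\beta\norm{H}\log(J/\Delta\epsilon)\bigr),
\]
and obtain $\|\widehat{\cH}-\BB^\dagger\BB\|\le\epsilon_1$. This choice fixes $T_{\max}$. The qubit count also follows, since we need $\log N = \Or(\log\beta+\log\norm{H}+\log\log(J/\Delta\epsilon))$ time-register qubits, plus $2n$ system qubits, $\log J$ qubits for the jump index, and $p+1$ ancillas.

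\emph{Step 2: spectral picture of $\BB$.} By Weyl's inequality, $\BB^\dagger\BB$ has exactly one eigenvalue in $[0,\epsilon_1]$ and all its other eigenvalues lie in $[\Delta-\epsilon_1,\infty)$. So $\BB$ has a unique smallest right singular vector $\ket{\psi_g}$ with singular value at most $\sqrt{\epsilon_1}$, and every other singular value is at least $\sqrt{\Delta/2}$. \cref{lem:singularperturbation} then gives $1-|\llangle\sr|\psi_g\rangle|^2\le(2\epsilon_1/\Delta)^2=\Or(\epsilon^2)$. Consequently the warm start still has $|\langle\phi|\psi_g\rangle|^2=\Omega(1)$.

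\emph{Step 3: filter.} The block-encoding of $\BB$ has subnormalization $\alpha=2\sqrt{JC}=\Or(\sqrt J)$. In the block-encoded matrix $\BB/\alpha$, the target singular value is at most $\sqrt{\epsilon_1}/\alpha$ and the rest are at least $\sqrt{\Delta/2}/\alpha$. Apply QSVT with an even polynomial $P$ of degree
\[
d=\Or\!\left(\frac{\alpha}{\sqrt{\Delta}}\log\frac1\epsilon\right)=\Or\!\left(\sqrt{\frac{J}{\Delta}}\log\frac1\epsilon\right),
\]
chosen so that $|P(x)|\le1$ on $[-1,1]$, $|P(x)-1|\le\epsilon$ for $|x|\le\sqrt{\epsilon_1}/\alpha$, and $|P(x)|\le\epsilon$ for $|x|\ge\sqrt{\Delta/2}/\alpha$. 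Such a polynomial comes from the standard rectangle/threshold construction of \cite{GilyenSuLowEtAl2019}. One point needs checking here: the pass region must be wide enough relative to the transition width. Since $\sqrt{\epsilon_1}\ll\sqrt{\Delta}$ by the choice of $c$, the transition window has width $\Theta(\sqrt{\Delta}/\alpha)$, and the degree bound stands.

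Because the polynomial is even, QSVT implements $P(\sqrt{\BB^\dagger\BB/\alpha^2})$ on the input (right) space, which is approximately the projector $\ket{\psi_g}\bra{\psi_g}$. Applying it to $\ket\phi$ and postselecting the ancillas on $\ket{0}$ succeeds with probability $\Omega(1)$. Fixed-point or oblivious amplitude amplification then boosts this to near certainty using $\Or(1)$ rounds, and hence $\Or(1)$ copies of $\ket\phi$. Each query to $U_\BB$ costs a constant number of queries to the listed oracles, which gives the stated query count.

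\emph{Step 4: error accounting.} The output is $\Or(\epsilon)$-close to $\ket{\psi_g}$ up to a phase, and $\ket{\psi_g}$ is $\Or(\epsilon)$-close to $|\sr\rrangle$ by Step 2. Rescaling $\epsilon$ by a constant finishes the argument.

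The step I expect to be most delicate is Step 3, specifically making sure the amplitude-amplification stage does not spoil the logarithmic dependence. I would handle this by using fixed-point amplification with $\Or(\log(1/\epsilon))$ extra depth but $\Or(1)$ overall repetition factor, treating the $\Omega(1)$ overlap as a constant.
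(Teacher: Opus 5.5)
Your proposal is correct and follows essentially the same route as the paper: discretize to accuracy $\Theta(\Delta\epsilon)$ via \cref{lem:quadrature_error}, use Weyl and \cref{lem:singularperturbation} to get the $\sqrt{\Delta/2}$ singular-value gap and the $\Or(\epsilon)$ closeness of $\ket{\psi_g}$ to the target, then apply a QSVT singular-value threshold filter to the $\Or(\sqrt{J})$-subnormalized block-encoding of $\BB$. The paper cites \cite[Proposition 5, SI]{leng2026operator} for the filter step that you spell out. You should drop the amplification step. The $\Omega(1)$ postselection success probability already gives $\Or(1)$ expected uses of $\ket{\phi}$ by simple repeat-until-success. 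Amplitude amplification, by contrast, would need a unitary that prepares $\ket{\phi}$ rather than copies of it. Oblivious amplification does not apply to a projector-like filter. And fixed-point amplification down to failure probability $\epsilon$ would multiply the query count by an extra $\log(1/\epsilon)$ rather than add to it.
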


\begin{proof}
    Let $\eta := \epsilon\Delta/2$ with $\epsilon<1$. By~\cref{lem:quadrature_error}, we can choose $T$ and $N$ so that
    \begin{equation}
        \|\widehat{H}-\BB^\dagger\BB\| \le \eta,
    \end{equation}
    with
    \begin{equation}
        T = \cO\left(\beta\log\frac{J}{\Delta\epsilon}\right),\qquad
        N = \cO\left(\log^2\frac{J}{\Delta\epsilon} + \beta\|H\|\log\frac{J}{\Delta\epsilon}\right).
    \end{equation}
    By the previous theorem, $\BB$ admits a block-encoding with a constant subnormalization factor, and so does $\BB^\dagger\BB$.

    Let $\ket{\psi_g}$ be the right singular vector of $\BB$ corresponding to its smallest singular value. By \cref{lem:singularperturbation},
    \begin{equation}
        1-|\llangle \sr|\psi_g\rangle|^2 \le \left(\frac{2\eta}{\Delta}\right)^2 \le \epsilon^2.
    \end{equation}
    Moreover, the second eigenvalue of $\BB^\dagger\BB$ is at least $\Delta-\eta \ge \Delta/2$, so the nonzero singular values of $\BB$ are separated from $0$ by at least $\sqrt{\Delta/2}$.

    We may therefore apply singular value thresholding to the block-encoding of $\BB$. Note that the block-encoding has a sub-normalization factor $2\sqrt{JC}=\cO(\sqrt{J})$.
    Starting from the warm-start state $\ket{\phi}$, the analysis of~\cite[Proposition 5, SI]{leng2026operator} yields an $\epsilon$-approximation of $|\sr\rrangle$ using
    \begin{equation}
        \cO\left(\sqrt{\frac{J}{\Delta}}\log\frac{1}{\epsilon}\right)
    \end{equation}
    queries to the block-encoding of $\BB$, and hence the same number of queries to $U_{(B^\flat_j)^\top}$, $U_{B^\sharp_j}$, $U_{\rm sel}$, $U_{\rm sel}^{\top}$, and $U_{\rm prep}$, up to constant factors. The qubit count follows from $r=\log J$ and $q=\log N$, and the maximal evolution time is $T_{\max}=T$. 
\end{proof}

\begin{rem}
The warm-start assumption can be relaxed to $|\langle \phi |\sr\rrangle |^2=\Omega(1/\poly(n))$, which would still yield a polynomial-time quantum algorithm.
\end{rem}

\section{Auxiliary Lindbladian dynamics for warm starts}
\label{append:auxiliary-dynamics}

Let us consider a modified version of the DLL Lindbladian, where we define
\begin{equation}\label{eqn:new-B-operators-auxiliary}
\begin{split}
    B^\sharp_j &= \sum_{\nu} q(\nu) e^{-\beta \nu/4} A_{j,\nu}, \\
    B^\flat_j &= \sum_{\nu} q(\nu) e^{\beta \nu/4} A_{j,\nu},
\end{split}
\end{equation}
where we do not enforce the symmetric condition $q(-\nu) = \overline{q(\nu)}$. 
From the definition~\cref{eqn:new-B-operators-auxiliary}, the two operators are related to each other via:
\begin{equation}\label{eqn:aux-compatibility}
    B_j^\sharp \sigma^{1/2} = \sigma^{1/2} B_j^\flat,\quad \forall j.
\end{equation}

We use the following vectorization map following the \emph{column-major} order: the vectorization of a matrix $X=\sum_{ij} x_{ij}\ketbra{i}{j}$ is
\begin{equation}\label{eqn:column-major-vectorization}
    X \mapsto |X\rrangle = \sum_{ij} x_{ij}\ket{j}\ket{i}.
\end{equation}
Thus left multiplication acts on the second tensor factor and right multiplication acts on the first tensor factor with a transpose:
\begin{equation}\label{eqn:column-major-left-right}
\begin{split}
    (I\otimes A)|X\rrangle &= |AX\rrangle, \\
    (A^\top\otimes I)|X\rrangle &= |XA\rrangle.
\end{split}
\end{equation}

Then
\begin{equation}
    \widehat{\cB}_j = I\otimes B^\sharp_j - (B^\flat_j)^\top \otimes I.
\end{equation}
We can define an auxiliary Lindbladian $\cL_{aux}$ as follows acting on the doubled Hilbert space:
\begin{equation}\label{eqn:auxiliary-dynamics-undressed}
    \cL_{\rm aux}(\cdot) = \sum_j \left(\widehat{\cB}_j(\cdot)\widehat{\cB}_j^\dagger - \frac{1}{2}\{\widehat{\cB}_j^\dagger \widehat{\cB}_j,\cdot\}\right).
\end{equation}

An interesting feature of the present construction is that, in the auxiliary dynamics used for warm starts, restricting to the factors at $t=0$ does not appear to degrade the spectral gap substantially in the examples we studied. This is compatible with the monotonicity comparison principle recently identified in~\cite{slezak2026polynomial}: restricting the family of first-order factors can reduce the gap in general, but that principle does not by itself quantify the size of the reduction. In the classical setting of~\cite{leng2026operator}, the auxiliary warm-start dynamics does not encounter the $\beta=0$ non-ergodicity that appears here. In the quantum setting, by contrast, we found somewhat unexpectedly that a unitary dressing of the jump operators can remove the Bell-diagonal obstruction entirely. This suggests that the design of auxiliary dissipative dynamics has considerably more flexibility than is presently understood, and merits further analysis.

\subsection{Behavior of the Hermitian generator at \texorpdfstring{$\beta=0$}{beta=0}}

We now restrict to the infinite-temperature case $\beta=0$ and choose the couplings to be the single-site Pauli operators. For each site $j\in[n]$ and each $\alpha\in\{x,y,z\}$, let
\begin{equation}
    A_{j,\alpha} = \sigma_j^\alpha.
\end{equation}
We take $q(\nu)=1$, so $B^\sharp_{j,\alpha} = B^\flat_{j,\alpha} = \sigma_j^\alpha$ and
\begin{equation}
    \widehat{\cB}_{j,\alpha} = I\otimes \sigma_j^\alpha - (\sigma_j^\alpha)^\top\otimes I.
\end{equation}

\begin{prop}\label{prop:auxiliary-lindbladian-beta-zero}
    Let $\rho_t = e^{t\cL_{aux}}(\rho_0)$ be the evolution generated by the auxiliary Lindbladian with jump operators $\{\widehat{\cB}_{j,\alpha}\}_{j\in[n],\alpha\in\{x,y,z\}}$ defined above. Then:
    \begin{enumerate}
        \item The semigroup is not primitive. In particular, both $I/2^{2n}$ and $\ketbra{\sr}{\sr}$ are stationary states.
        \item For every observable $O$ acting on the first register,
        \begin{equation}
            \lim_{t\to\infty} \Tr\left[(O\otimes I)\rho_t\right] = \frac{\Tr(O)}{2^n}.
        \end{equation}
        Equivalently,
        \begin{equation}
            \lim_{t\to\infty} \Tr_2[\rho_t] = \frac{I}{2^n}.
        \end{equation}
        \item In the same $\beta=0$ setting, we denote $\cP_n$ as the set of $n$-qubit Pauli strings. The Bell-diagonal subspace
        \begin{equation}
            \mathsf{S}_{\mathrm{Bell}} \coloneqq \mathrm{span}\left\{\,|P\rrangle\llangle P| : P\in\cP_n\right\}
        \end{equation}
        is invariant under $\cL_{aux}$. More precisely, for each subset $T\subseteq[n]$, the subspace
        \begin{equation}
            \mathsf{S}_T \coloneqq \mathrm{span}\left\{\,|P\rrangle\llangle P| : P\in\cP_n,\ \operatorname{supp}(P)=T\right\}
        \end{equation}
        is invariant under $\cL_{aux}$.
        \item For each $T\subseteq[n]$, the state
        \begin{equation}
            \omega_T \coloneqq \frac{1}{2^n3^{|T|}}\sum_{P\in\cP_n:\,\operatorname{supp}(P)=T}|P\rrangle\llangle P|
        \end{equation}
        is stationary. In particular, $\omega_{\varnothing} = |\sr\rrangle\llangle\sr|$, so the purified Gibbs state is not the unique stationary state in the Bell-diagonal sector. Moreover, for $T\neq \varnothing$, the restriction of $\cL_{aux}$ to $\mathsf{S}_T$ has spectral gap $12$.
    \end{enumerate}
\end{prop}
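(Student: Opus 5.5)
The whole argument rests on one observation: at $\beta=0$, each jump operator $\widehat{\cB}_{j,\alpha}$ is the vectorization of an adjoint action. Here $\sigma=I/2^n$ and $q\equiv 1$, so $B^\sharp_{j,\alpha}=B^\flat_{j,\alpha}=\sigma_j^\alpha$. By \cref{eqn:column-major-left-right}, $\widehat{\cB}_{j,\alpha}|X\rrangle = |\sigma_j^\alpha X - X\sigma_j^\alpha\rrangle$, i.e.\ $\widehat{\cB}_{j,\alpha}$ is the vectorization of $\mathrm{ad}_{\sigma_j^\alpha}$. This holds for $\alpha=y$ as well, since the sign $(\sigma^y)^\top=-\sigma^y$ is absorbed by the rule $(A^\top\otimes I)|X\rrangle=|XA\rrangle$; checking this sign is the only place I expect care to be needed.

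Two consequences follow. First, $\mathrm{ad}_{\sigma_j^\alpha}$ is Hermitian for the Hilbert--Schmidt inner product, so $\widehat{\cB}_{j,\alpha}^\dagger=\widehat{\cB}_{j,\alpha}$. Second, on the Pauli basis we have $\widehat{\cB}_{j,\alpha}|P\rrangle = 2|\sigma_j^\alpha P\rrangle$ if $\sigma_j^\alpha$ anticommutes with $P$, and $0$ otherwise. Hence $\widehat{\cB}_{j,\alpha}^2 = 4\Pi_{j,\alpha}$, where $\Pi_{j,\alpha}$ is the projector onto the span of the Paulis anticommuting with $\sigma_j^\alpha$. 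Everything else is built from these two facts.

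\textbf{Item 1.} Since every jump operator is Hermitian, $\cL_{\rm aux}(I)=\sum(\widehat{\cB}\widehat{\cB}^\dagger-\widehat{\cB}^\dagger\widehat{\cB})=0$, so $I/2^{2n}$ is stationary. Since $|\sr\rrangle=2^{-n/2}|I\rrangle$ and $\mathrm{ad}_{\sigma}(I)=0$, every jump operator annihilates this vector, so $\ketbra{\sr}{\sr}$ is also stationary. Two distinct stationary states rule out primitivity.

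\textbf{Item 2.} I would work in the Heisenberg picture. Write $\widehat{\cB}=I\otimes s-\tau\otimes I$ with $s=\sigma_j^\alpha$ and $\tau=\pm s^\top$, which is a Hermitian unitary. Expanding directly gives
\begin{equation*}
\widehat{\cB}(O\otimes I)\widehat{\cB}-\tfrac12\{\widehat{\cB}^2,O\otimes I\} = (\tau O\tau - O)\otimes I .
\end{equation*}
The cross terms $\tau O\otimes s$ and $O\tau\otimes s$ cancel between the two pieces, using $\widehat{\cB}^2=2I-2\tau\otimes s$. Thus $\cL_{\rm aux}^\dagger$ preserves the algebra $\{O\otimes I\}$ and acts there as $O\mapsto\sum_{j,\alpha}(\sigma_j^\alpha O\sigma_j^\alpha-O)$; the signs and transposes drop out because $\{\pm(\sigma^\alpha)^\top\}=\{\pm\sigma^\alpha\}$. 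This is the generator of independent single-qubit depolarizing on every site. It kills every non-identity Pauli component exponentially fast, which gives $\Tr(O\rho_t)\to\Tr(O)/2^n$, equivalently $\Tr_2\rho_t\to I/2^n$.

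\textbf{Items 3 and 4.} Take the normalized states $|P\rrangle\llangle P|/2^n$. Using the Pauli-basis action above,
\begin{equation*}
\cL_{\rm aux}(|P\rrangle\llangle P|)=4\sum_{(j,\alpha):\{\sigma_j^\alpha,P\}=0}\bigl(|\sigma_j^\alpha P\rrangle\llangle \sigma_j^\alpha P| - |P\rrangle\llangle P|\bigr),
\end{equation*}
where the phases of $\sigma_j^\alpha P$ cancel in the outer product. Anticommutation forces $P_j\neq I$, and $\sigma_j^\alpha P_j$ is then again, up to phase, a non-identity Pauli. So the support is preserved, and each $\mathsf S_T$ is invariant. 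Restricted to $\mathsf S_T$, this is a classical Markov generator on $\{x,y,z\}^{T}$. At each site in $T$ the two anticommuting choices of $\alpha$ move the local Pauli to each of the other two types at rate $4$, and distinct sites evolve independently. The resulting generator is a sum over $j\in T$ of copies of $4(\mathbf 1\mathbf 1^\top-3I)$ on $\CC^3$, whose eigenvalues are $0$ and $-12$. Hence the uniform distribution $\omega_T$ is stationary, with trace $2^n3^{|T|}/(2^n3^{|T|})=1$, and the restricted gap is exactly $12$ whenever $T\neq\varnothing$. Finally, $\omega_\varnothing=|I\rrangle\llangle I|/2^n=\ketbra{\sr}{\sr}$.
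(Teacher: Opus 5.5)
Your proposal is correct and follows essentially the same route as the paper: the jumps are Hermitian, which gives $\cL_{\rm aux}(I)=0$ and Hilbert--Schmidt self-adjointness; the Heisenberg-picture reduction $\cL_{\rm aux}(O\otimes I)=\cD(O)\otimes I$ gives single-site depolarizing with $\cD(P)=-4w(P)P$; and the Pauli-basis computation reduces $\cL_{\rm aux}|_{\mathsf S_T}$ to a Kronecker sum of copies of the $3\times 3$ generator with spectrum $\{0,-12,-12\}$. Your direct expansion giving $(\tau O\tau-O)\otimes I$ is the paper's double-commutator identity written out, with one small notational fix: $\tau=s^\top=\pm s$, not $\pm s^\top$.
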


\begin{proof}
    Since each $\widehat{\cB}_{j,\alpha}$ is Hermitian, the generator can be written as
\begin{equation}\label{eqn:auxiliary-double-commutator}
        \cL_{aux}(X) = -\frac{1}{2}\sum_{j=1}^n\sum_{\alpha\in\{x,y,z\}} [\widehat{\cB}_{j,\alpha},[\widehat{\cB}_{j,\alpha},X]].
\end{equation}
    Hence $\cL_{aux}(I)=0$, so $I/2^{2n}$ is stationary. Moreover, at $\beta=0$ we have $\sigma = I/2^n$, hence $\sr = 2^{-n/2}I$. By construction, each jump operator annihilates $|\sr\rrangle$, namely $\widehat{\cB}_{j,\alpha}|\sr\rrangle = 0$. Therefore,
\begin{equation}
        \cL_{aux}(|\sr\rrangle\llangle\sr|) = 0.
\end{equation}

    For item 2, because the jumps are Hermitian, $\cL_{aux}$ is self-adjoint with respect to the Hilbert--Schmidt inner product, so $\cL_{aux}^\dagger=\cL_{aux}$ and for every state $\rho_0$,
\begin{equation}\label{eqn:auxiliary-heisenberg-reduction}
        \Tr\left[(O\otimes I)\rho_t\right] = \Tr\left[e^{t\cL_{aux}}(O\otimes I)\rho_0\right].
\end{equation}
    We next compute $e^{t\cL_{aux}}(O\otimes I)$. Since $[(I\otimes \sigma_j^\alpha),(O\otimes I)] = 0$, we have
\begin{equation}
\begin{split}
        \cL_{aux}(O\otimes I)
        &= -\frac{1}{2}\sum_{j,\alpha} [\widehat{\cB}_{j,\alpha},[\widehat{\cB}_{j,\alpha},O\otimes I]] \\
        &= -\frac{1}{2}\sum_{j,\alpha} [(\sigma_j^\alpha)^\top,[(\sigma_j^\alpha)^\top,O]]\otimes I.
\end{split}
\end{equation}
    Since $(\sigma_j^\alpha)^\top = \pm \sigma_j^\alpha$, the sign disappears after taking the double commutator. Therefore,
\begin{equation}
\begin{split}
        \cL_{aux}(O\otimes I) &= \cD(O)\otimes I, \\
        \cD(O) &\coloneqq -\frac{1}{2}\sum_{j,\alpha} [\sigma_j^\alpha,[\sigma_j^\alpha,O]].
\end{split}
\end{equation}
    Expand $O$ in the $n$-qubit Pauli basis:
\begin{equation}
\begin{split}
        O &= \sum_{P\in\cP_n} c_P P, \\
        c_I &= \frac{\Tr(O)}{2^n},
\end{split}
\end{equation}
    where $\cP_n$ denotes the set of Pauli strings. For a Pauli string $P = \bigotimes_{j=1}^n P_j$, let $w(P)$ be its Hamming weight, namely the number of sites with $P_j \neq I$. A direct calculation on one qubit shows that
\begin{equation}
        -\frac{1}{2}\sum_{\alpha\in\{x,y,z\}} [\sigma^\alpha,[\sigma^\alpha,Q]] =
        \begin{cases}
            0, & Q = I,\\
            -4Q, & Q \in \{X,Y,Z\}.
        \end{cases}
\end{equation}
    By tensor-product structure, this implies
\begin{equation}\label{eqn:auxiliary-commutator-eigenvalue}
        \cD(P) = -4 w(P) P
\end{equation}
    for every Pauli string $P$. Hence
\begin{equation}
        e^{t\cD}(O) = c_I I + \sum_{P\neq I} e^{-4w(P)t} c_P P,
\end{equation}
    and therefore
\begin{equation}\label{eqn:auxiliary-observable-limit}
        \lim_{t\to\infty} e^{t\cL_{aux}}(O\otimes I) = \frac{\Tr(O)}{2^n} I\otimes I.
\end{equation}
    Combining \cref{eqn:auxiliary-heisenberg-reduction,eqn:auxiliary-observable-limit}, we obtain
\begin{equation}
        \lim_{t\to\infty} \Tr\left[(O\otimes I)\rho_t\right]
        = \Tr\left[\frac{\Tr(O)}{2^n} I\otimes I\,\rho_0\right]
        = \frac{\Tr(O)}{2^n}.
\end{equation}
    This is the expectation value of $O$ in the infinite-temperature Gibbs state. Since
\begin{equation}
        \Tr\left[(O\otimes I)\rho_t\right] = \Tr\left[O\,\Tr_2(\rho_t)\right]
\end{equation}
    for every observable $O$, we conclude that $\Tr_2(\rho_t) \to I/2^n$.

    For items 3 and 4, let
\begin{equation}
        \rho_P \coloneqq |P\rrangle\llangle P|,
\end{equation}
    where $P=\bigotimes_{j=1}^n P_j\in\cP_n$ is a Pauli string. For each site $j$, write
\begin{equation}
\begin{split}
        \cL_j(\cdot) &\coloneqq -\frac{1}{2}\sum_{\alpha\in\{x,y,z\}} [\widehat{\cB}_{j,\alpha},[\widehat{\cB}_{j,\alpha},\cdot]], \\
        \cL_{aux} &= \sum_{j=1}^n \cL_j.
\end{split}
\end{equation}
    If $P_j = I$, then $[\sigma_j^\alpha,P]=0$ for every $\alpha$, hence $\widehat{\cB}_{j,\alpha}|P\rrangle = 0$ and therefore
\begin{equation}
        \cL_j(\rho_P)=0.
\end{equation}
    Now assume that $P_j = \sigma^\beta$ with $\beta\in\{x,y,z\}$. For $\alpha=\beta$ we again have $\widehat{\cB}_{j,\beta}|P\rrangle=0$. If $\alpha\neq \beta$, then $[\sigma_j^\alpha,P] = \pm 2\I P^{(j,\gamma)}$, where $P^{(j,\gamma)}$ is obtained from $P$ by replacing the $j$-th factor by the third Pauli matrix $\sigma^\gamma$. Consequently,
\begin{equation}
\begin{split}
        \widehat{\cB}_{j,\alpha}\rho_P\widehat{\cB}_{j,\alpha}
        &= 4\rho_{P^{(j,\gamma)}}, \\
        \widehat{\cB}_{j,\alpha}^2|P\rrangle &= 4|P\rrangle.
\end{split}
\end{equation}
    Summing the two contributions with $\alpha\neq\beta$ gives
\begin{equation}\label{eqn:bell-sector-generator}
        \cL_j(\rho_P) = 4\sum_{\eta\in\{x,y,z\}\setminus\{\beta\}} \left(\rho_{P^{(j,\eta)}} - \rho_P\right).
\end{equation}
    This formula shows that $\cL_j$ preserves the support set $\operatorname{supp}(P)=\{u\in[n] : P_u\neq I\}$, because it only changes which non-identity Pauli matrix appears at site $j$. Hence each $\mathsf{S}_T$ is invariant, and summing over $j$ yields $\cL_{aux}(\mathsf{S}_{\mathrm{Bell}})\subseteq \mathsf{S}_{\mathrm{Bell}}$.

    The above calculation also identifies the restriction of $\cL_{aux}$ to $\mathsf{S}_T$ with a classical continuous-time Markov chain on the $3^{|T|}$ Pauli strings supported on $T$: at each occupied site, one jumps from one Pauli label to either of the other two labels at rate $4$. The unique stationary distribution on this finite state space is the uniform one, which is exactly $\omega_T$. Therefore every $\omega_T$ is stationary, and for $T=\varnothing$ we obtain
\begin{equation}
        \omega_{\varnothing} = \frac{1}{2^n}|I\rrangle\llangle I| = |\sr\rrangle\llangle\sr|.
\end{equation}
    Thus, the purified Gibbs state is only one member of the $2^n$ distinct stationary Bell-diagonal states $\{\omega_T : T\subseteq[n]\}$.

    Finally, the generator on a single occupied site is
\begin{equation}
        K = 4
        \begin{pmatrix}
            -2 & 1 & 1 \\
            1 & -2 & 1 \\
            1 & 1 & -2
        \end{pmatrix},
\end{equation}
    whose eigenvalues are $0,-12,-12$. On $\mathsf{S}_T$, the restricted generator is the Kronecker sum of $|T|$ copies of $K$, one for each occupied site. Hence its spectrum is
\begin{equation}
        \{ -12m : m=0,1,\dots,|T|\},
\end{equation}
    so the spectral gap equals $12$ whenever $T\neq\varnothing$. Therefore, for any Bell-diagonal initial state, the component in each nontrivial sector $\mathsf{S}_T$ decays to its stationary projection as $\Or(e^{-12t})$.
\end{proof}

\subsection{The unitary dressing mechanism}\label{sec:auxiliary-unitary-dressing}

The jump operators are designed to satisfy $\widehat{\cB}_j|\sr\rrangle = 0$. Hence, if we replace $\widehat{\cB}_j$ by $\mc{U}\widehat{\cB}_j$ with any unitary matrix $\mc{U}$ acting on the doubled Hilbert space, then the relation $(\mc{U}\widehat{\cB}_j)|\sr\rrangle = 0$ still holds. Therefore $|\sr\rrangle\llangle\sr|$ remains stationary for the modified auxiliary dynamics. 

First, we study an explicit 1-qubit example in which the extra stationary state in the Bell-diagonal sector can be completely removed by the unitary dressing.
Let $|I\rrangle,|X\rrangle,|Y\rrangle,|Z\rrangle$ denote the Bell basis. For one qubit at $\beta=0$, by \cref{prop:auxiliary-lindbladian-beta-zero}, the states
\begin{equation}
    \omega_{\varnothing} = \frac{1}{2}|I\rrangle\llangle I|, \qquad
    \omega_{[1]} = \frac{1}{6}\left(|X\rrangle\llangle X|+|Y\rrangle\llangle Y|+|Z\rrangle\llangle Z|\right)
\end{equation}
are stationary for the Hermitian generator.
Now we write
\begin{equation}
    \widehat{\cB}_x = I\otimes X - X\otimes I, \quad
    \widehat{\cB}_y = I\otimes Y + Y\otimes I, \quad
    \widehat{\cB}_z = I\otimes Z - Z\otimes I,
\end{equation}
and define the unitary dressing operators:
\begin{equation}
    \mc{U}_x \coloneqq I\otimes Z, \qquad
    \mc{U}_y \coloneqq I\otimes X, \qquad
    \mc{U}_z \coloneqq I\otimes Y.
\end{equation}
Set $F_\alpha \coloneqq \mc{U}_\alpha\widehat{\cB}_\alpha$, and we consider the modified auxiliary dynamics generated by
\begin{equation}
    \cL_U[\cdot] \coloneqq \sum_{\alpha\in\{x,y,z\}} \left(F_\alpha(\cdot)F_\alpha^\dagger - \frac{1}{2}\{F_\alpha^\dagger F_\alpha,\cdot\}\right).
\end{equation}
While it can be readily checked that $\cL_U[\omega_{\varnothing}] = \frac{1}{2} \cL_U[\rho_I]=0$, a direct calculation gives
\begin{align}
    \cL_U[\rho_X] &= 4(\rho_Y+\rho_I-2\rho_X),\\
    \cL_U[\rho_Y] &= 4(\rho_Z+\rho_I-2\rho_Y),\\
    \cL_U[\rho_Z] &= 4(\rho_X+\rho_I-2\rho_Z).
\end{align}
Hence $\omega_{[1]}$ is no longer a stationary state:
\begin{align}
    \cL_U[\omega_{[1]}] &= \frac{1}{6}\left(\cL_U[\rho_X]+ \cL_U[\rho_Y] + \cL_U[\rho_Z]\right) = 4 \omega_{\varnothing} - 4 \omega_{[1]}.
\end{align}
In fact, we can show that $\cL_U[\cdot]$ preserves the 1-qubit Bell-diagonal subspace $\mathsf{S}_{\mathrm{Bell}} = \mathrm{span}\{|P\rrangle \llangle P| \colon P = I, X, Y, Z\}$. Using the following change of basis:
\begin{equation}
    d_X \coloneqq \rho_X-\rho_I, \qquad d_Y \coloneqq \rho_Y-\rho_I, \qquad d_Z \coloneqq \rho_Z-\rho_I.
\end{equation}
we have that
\begin{equation}
\begin{split}
    \cL_U(d_X) = -8d_X+4d_Y,\quad \cL_U(d_Y) = -8d_Y+4d_Z, \quad \cL_U(d_Z) = 4d_X-8d_Z,
\end{split}
\end{equation}
therefore the restriction of $\cL_U$ to the ordered basis $(d_X,d_Y,d_Z)$ is represented by the matrix 
\begin{equation}\label{eqn:one-qubit-unitary-dressed-bell}
    M =
    \begin{pmatrix}
        -8 & 4 & 0 \\
        0 & -8 & 4 \\
        4 & 0 & -8
    \end{pmatrix}.
\end{equation}
It follows that $\omega_{\varnothing}$ is the unique stationary vector of the modified Lindbladian $\cL_U$ in $\mathsf{S}_{\mathrm{Bell}}$.
The spectrum of $M$ is $\{-4,-10+2\sqrt{3}i,-10-2\sqrt{3}i\}$, so the spectral gap of $\cL_U|_{\mathsf{S}_{\mathrm{Bell}}}$ is $4$.

Next, we generalize the unitary dressing mechanism to $n$ qubits. For each $j\in[n]$ and $\alpha\in\{x,y,z\}$, write
\begin{equation}\label{eqn:unitary-dress-1}
    \widehat{\cB}_{j,\alpha} = I\otimes \sigma_j^\alpha - (\sigma_j^\alpha)^\top\otimes I.
\end{equation}
Define
\begin{equation}\label{eqn:unitary-dress-2}
    \mc{U}_{j,x} \coloneqq I\otimes \sigma_j^z,\qquad
    \mc{U}_{j,y} \coloneqq I\otimes \sigma_j^x,\qquad
    \mc{U}_{j,z} \coloneqq I\otimes \sigma_j^y,
\end{equation}
and set the jump operators $F_{j,\alpha} \coloneqq \mc{U}_{j,\alpha}\widehat{\cB}_{j,\alpha}$.
We now define 
\begin{equation}\label{eqn:n-qubit-aux}
    \cL_U=\sum_{j=1}^n \cL_{U,j}, \quad \text{where} \quad \cL_{U,j}(\cdot) \coloneqq \sum_{\alpha\in\{x,y,z\}} \left(F_{j,\alpha}(\cdot)F_{j,\alpha}^\dagger - \frac{1}{2}\{F_{j,\alpha}^\dagger F_{j,\alpha},\cdot\}\right),
\end{equation}
Similar to the 1-qubit case, $\cL_U$ preserves the $n$-qubit Bell-diagonal sector, and the spectrum of $\cL_U|_{\mathsf{S}_{\mathrm{Bell}}}$ can be explicitly calculated. It turns out that $\cL_U|_{\mathsf{S}_{\mathrm{Bell}}}$ has a spectral gap $4$, thus for every Bell-diagonal initial state $\rho_0$,
\begin{equation}
    e^{t\cL_U}(\rho_0) = \omega_{\varnothing} + \Or(e^{-4t}) \quad \text{as} \quad t \to \infty.
\end{equation}

In what follows, we give a rigorous statement of this result.

\begin{prop}\label{prop:n-qubit-unitary-dressed-bell}
    Let $\cP_n$ be the set of $n$-qubit Pauli strings, and $\cL_U$ be the same as in~\cref{eqn:n-qubit-aux}. Then, we have the following: 
    \begin{enumerate}
        \item The Bell-diagonal subspace $\mathsf{S}_{\mathrm{Bell}} \coloneqq \mathrm{span}\{\,|P\rrangle\llangle P| : P\in\cP_n\}$ is invariant under $\cL_U$.
        \item $\omega_{\varnothing}=\frac{1}{2^n}|I\rrangle\llangle I|=|\sr\rrangle\llangle\sr|$ is the unique stationary state of the auxiliary dynamics $e^{t\cL_U}$ in the Bell-diagonal subspace $\mathsf{S}_{\mathrm{Bell}}$.
        \item The spectrum of $\cL_U|_{\mathsf{S}_{\mathrm{Bell}}}$ is
    \begin{equation}
        \left\{\sum_{j=1}^n \lambda_j : \lambda_j\in\{0,-4,-10+2\sqrt{3}\,\I,-10-2\sqrt{3}\,\I\}\right\}.
    \end{equation}
    \end{enumerate}
\end{prop}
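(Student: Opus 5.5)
The plan is to reduce everything to the one-qubit computation already carried out above (the matrix $M$ in \cref{eqn:one-qubit-unitary-dressed-bell}) by exhibiting a tensor-product (Kronecker-sum) structure of $\cL_U$ on $\mathsf{S}_{\mathrm{Bell}}$. First, identify $|P\rrangle$ for $P=\bigotimes_j P_j$ with $\bigotimes_j |P_j\rrangle$, after reordering the doubled register site by site (first-copy and second-copy qubit $j$ grouped together). The column-major vectorization \cref{eqn:column-major-vectorization} factorizes over tensor products, so this holds. Under this identification $\rho_P=\bigotimes_j \rho_{P_j}$, and $\mathsf{S}_{\mathrm{Bell}}\cong \bigotimes_{j=1}^n \mathsf{S}^{(1)}_{\mathrm{Bell}}$, where $\mathsf{S}^{(1)}_{\mathrm{Bell}}$ is the four-dimensional one-qubit Bell-diagonal space.

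Second, show that $\cL_{U,j}$ acts only on the $j$-th tensor factor, as the one-qubit generator $\cL_U^{(1)}$ computed above. The jump operator is $F_{j,\alpha}=(I\otimes\sigma_j^{\gamma(\alpha)})(I\otimes\sigma_j^\alpha-(\sigma_j^\alpha)^\top\otimes I)$, and it is supported on the two copies of qubit $j$. Using \cref{eqn:column-major-left-right}, $F_{j,\alpha}|P\rrangle=|\sigma^{\gamma}_j(\sigma_j^\alpha P - P\sigma_j^\alpha)\rrangle$, which is a phase times $|P'\rrangle$, where $P'$ differs from $P$ only at site $j$. The phase cancels in $F\rho_P F^\dagger$. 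Moreover, $F_{j,\alpha}^\dagger F_{j,\alpha}=\widehat{\cB}_{j,\alpha}^\dagger\widehat{\cB}_{j,\alpha}$ because the dressing is unitary. The proof of \cref{prop:auxiliary-lindbladian-beta-zero} shows this operator acts diagonally on $|P\rrangle$, with eigenvalue $0$ or $4$ depending only on $P_j$. Hence $\cL_{U,j}(\rho_P)=\bigl(\bigotimes_{i\ne j}\rho_{P_i}\bigr)\otimes\cL_U^{(1)}(\rho_{P_j})$, with $\cL_U^{(1)}$ given by the explicit formulas $\cL_U[\rho_X]=4(\rho_Y+\rho_I-2\rho_X)$ and so on, together with $\cL_U^{(1)}[\rho_I]=0$. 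Item 1 follows immediately.

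Third, conclude that $\cL_U|_{\mathsf{S}_{\mathrm{Bell}}}=\sum_j I\otimes\cdots\otimes \cL_U^{(1)}\otimes\cdots\otimes I$, a Kronecker sum. In the basis $(\rho_I,d_X,d_Y,d_Z)$ the one-qubit generator is block triangular: $\rho_I\mapsto 0$, and the $d$-block is $M$. Its spectrum is therefore $\{0\}\cup\mathrm{Spec}(M)=\{0,-4,-10\pm2\sqrt3\,\I\}$. The spectrum of a Kronecker sum is the set of sums of eigenvalues, which can be seen by simultaneous Schur triangularization without any diagonalizability assumption. This gives item 3.

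For item 2, the eigenvalue $0$ of $\cL_U^{(1)}$ has algebraic multiplicity one, and every other eigenvalue has strictly negative real part. Hence $\sum_j\lambda_j=0$ forces all $\lambda_j=0$, so $0$ has algebraic (hence geometric) multiplicity one in the Kronecker sum. The kernel is therefore spanned by $\rho_I^{\otimes n}\propto|I\rrangle\llangle I|=2^n\omega_\varnothing$, which is stationary since $F_{j,\alpha}|\sr\rrangle=0$.

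The main obstacle I expect is the bookkeeping in the second step: checking that the register reordering intertwines the column-major vectorization with site-wise tensor products, and that the transposes and phases do not spoil the site-local form. The one-qubit coefficients themselves are already given in the text above.
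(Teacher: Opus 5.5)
Your proof is correct. For items 1 and 3 it follows the paper's route: the dressed jumps act site-locally on Bell vectors, $\mathsf{S}_{\mathrm{Bell}}\cong(\mathsf{S}^{(1)}_{\mathrm{Bell}})^{\otimes n}$, and $\cL_U|_{\mathsf{S}_{\mathrm{Bell}}}$ is a Kronecker sum of the one-site generator.

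\textbf{Item 2 is proved differently.} The paper uses no spectral information for uniqueness. On the basis $\{\rho_P\}$ it reads $\cL_U$ as the generator of a classical continuous-time Markov chain. Each site moves $X\to Y\to Z\to X$ and $X,Y,Z\to I$ at rate $4$, and $I$ is absorbing. Hence $\{I^{\otimes n}\}$ is the only closed communicating class, and the stationary vector is unique. You instead obtain item 2 as a corollary of the spectral computation. You count algebraic multiplicities in the triangularized Kronecker sum: $0$ is simple for the one-site generator, and the other one-site eigenvalues have real part at most $-4$, so only the all-zero tuple sums to $0$.

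The trade-off is as follows. The paper's argument depends only on the transition graph, so it survives any change of the positive rates. Yours depends on the exact one-site spectrum, in particular that $M$ is nonsingular with spectrum in the open left half-plane. In exchange, items 2 and 3 become a single computation.

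\textbf{Item 3 also differs slightly.} Your Schur-triangularization argument for the spectrum of a Kronecker sum does not need diagonalizability. The paper instead uses the fact that the one-site restriction has four distinct eigenvalues. Your explicit basis $(\rho_I,d_X,d_Y,d_Z)$, in which the one-site restriction is $\mathrm{diag}(0,M)$, is also cleaner than the paper's reuse of the symbol $M$ for both the $3\times 3$ matrix in \cref{eqn:one-qubit-unitary-dressed-bell} and the four-dimensional restriction.

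One small wording point: $F_{j,\alpha}|P\rrangle$ is either $0$ or a scalar of modulus $2$ times $|P'\rrangle$, not a phase. The factor $4$ in $F\rho_PF^\dagger$ comes from this modulus.
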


\begin{proof}
    For $P\in\cP_n$, write $\rho_P \coloneqq |P\rrangle\llangle P|$.
    For $Q\in\{I,X,Y,Z\}$, let $P^{(j,Q)}$ be the Pauli string obtained from $P$ by replacing the $j$-th factor by $Q$. Also define the cyclic map
    \begin{equation}
        c(X)=Y,\qquad c(Y)=Z,\qquad c(Z)=X,
    \end{equation}
    then we can check that 
    \begin{equation}
        \cL_{U,j}(\rho_P)= \begin{cases}
            0 & \text{if } P_j=I,\\
            4\left(\rho_{P^{(j,c(P_j))}}+\rho_{P^{(j,I)}}-2\rho_P\right)
        & \text{if } P_j\in\{X,Y,Z\}
        \end{cases} 
    \end{equation}
    By construction, for every $j$ and $\alpha$, $F_{j,\alpha}|\sr\rrangle = \mc{U}_{j,\alpha}\widehat{\cB}_{j,\alpha}|\sr\rrangle = 0$. 
    Thus $\omega_{\varnothing}=|\sr\rrangle\llangle\sr|$ is a stationary state.

    Fix a Pauli string $P\in\cP_n$ and a site $j$. By \cref{eqn:column-major-left-right}, the action of the dressed jumps on the $j$-th local factor reduces to the 1-qubit case. Therefore, if $P_j=I$, then
    \begin{equation}
        F_{j,\alpha}|P\rrangle = 0 \qquad \text{for all } \alpha\in\{x,y,z\}.
    \end{equation}
    If $P_j\neq I$, the only nonzero actions are
    \begin{equation}
    \begin{split}
        P_j=X: \qquad &F_{j,y}|P\rrangle = -2|P^{(j,Y)}\rrangle,\qquad
        F_{j,z}|P\rrangle = 2\I |P^{(j,I)}\rrangle, \\
        P_j=Y: \qquad &F_{j,z}|P\rrangle = -2|P^{(j,Z)}\rrangle,\qquad
        F_{j,x}|P\rrangle = 2\I |P^{(j,I)}\rrangle, \\
        P_j=Z: \qquad &F_{j,x}|P\rrangle = -2|P^{(j,X)}\rrangle,\qquad
        F_{j,y}|P\rrangle = 2\I |P^{(j,I)}\rrangle.
    \end{split}
    \end{equation}
    Hence each $F_{j,\alpha}$ sends a Bell basis vector either to $0$ or to a scalar multiple of another Bell basis vector, so $\mathsf{S}_{\mathrm{Bell}}$ is invariant. This proves part 1.

    Since each $\mc{U}_{j,\alpha}$ is unitary,
    \begin{equation}
        F_{j,\alpha}^\dagger F_{j,\alpha} = \widehat{\cB}_{j,\alpha}^\dagger \widehat{\cB}_{j,\alpha} = \widehat{\cB}_{j,\alpha}^2.
    \end{equation}
    If $P_j=I$, then all three jumps vanish and therefore $\cL_{U,j}(\rho_P)=0$. If $P_j=X$, only $F_{j,y}$ and $F_{j,z}$ contribute, and
    \begin{equation}
    \begin{split}
        \cL_{U,j}(\rho_P)
        &= \left(4\rho_{P^{(j,Y)}}-2\rho_P\right)+\left(4\rho_{P^{(j,I)}}-2\rho_P\right) \\
        &= 4\left(\rho_{P^{(j,Y)}}+\rho_{P^{(j,I)}}-2\rho_P\right).
    \end{split}
    \end{equation}
    The cases $P_j=Y$ and $P_j=Z$ are identical after cyclic permutation, which gives the stated formula for all $P_j\in\{X,Y,Z\}$.

    On the basis $\{\rho_P : P\in\cP_n\}$, the restriction of $\cL_U$ is therefore the generator of a finite continuous-time Markov chain. At each site one has
    \begin{equation}
        X \to Y,\qquad Y \to Z,\qquad Z \to X,\qquad
        X \to I,\qquad Y \to I,\qquad Z \to I,
    \end{equation}
    all at rate $4$, while $I$ is absorbing. The only closed communicating class is the singleton $\{I^{\otimes n}\}$. Hence the unique stationary vector in $\mathsf{S}_{\mathrm{Bell}}$ is $\rho_{I^{\otimes n}}$, and the unique stationary state is $\omega_{\varnothing}$, which proves part 2.

    Finally, we compute the spectrum of $\cL_U|_{\mathsf{S}_{\mathrm{Bell}}}$.
    Let $\mathsf{S}_{\mathrm{Bell}}^{(1)}\coloneqq \mathrm{span}\{\rho_I,\rho_X,\rho_Y,\rho_Z\}$, and let $M$ (as in~\cref{eqn:one-qubit-unitary-dressed-bell}) denote the one-qubit restriction to this space. Up to the fixed tensor-factor ordering induced by \cref{eqn:column-major-vectorization}, we have
    \begin{equation}
        \mathsf{S}_{\mathrm{Bell}} \cong \left(\mathsf{S}_{\mathrm{Bell}}^{(1)}\right)^{\otimes n},
        \qquad
        \cL_U|_{\mathsf{S}_{\mathrm{Bell}}}
        = \sum_{j=1}^n I^{\otimes (j-1)}\otimes M \otimes I^{\otimes (n-j)}.
    \end{equation}
    As we have discussed in the 1-qubit case, the one-site spectrum is $\{0,-4,-10+2\sqrt{3}i,-10-2\sqrt{3}i\}$. 
    Since $M$ has four distinct eigenvalues, it is diagonalizable. Therefore the spectrum of $\cL_U|_{\mathsf{S}_{\mathrm{Bell}}}$ is exactly the set of all sums of one-site eigenvalues, and every nonzero eigenvalue has real part at most $-4$. 
\end{proof}

\subsection{Numerical study: spectral gap of the auxiliary Lindbladian}\label{append:aux-spec-gap}

To prepare the Gibbs state of a quantum Hamiltonian, we compare three classes of Lindbladian dynamics: 
\begin{enumerate}
    \item the DLL quantum Gibbs sampler (acting on the original Hilbert space);
    \item the auxiliary Lindbladian dynamics given in~\cref{eqn:auxiliary-dynamics-undressed} (without unitary dressing); and
    \item the auxiliary Lindbladian dynamics with unitary dressing (see~\cref{eqn:unitary-dress-1,eqn:unitary-dress-2}).
\end{enumerate}

In~\cref{fig:three_lindbladian_gap}, we numerically compute the spectral gaps of these three Lindbladian generators for the 2-qubit TFIM Hamiltonian $H = -0.1(X_1+X_2)-Z_1Z_2$ at various values of $\beta$.
The DLL Lindbladian directly prepares the thermal state $\sigma$. The latter two are auxiliary dynamics on the doubled Hilbert space for which $|\sr\rrangle\llangle\sr|$ is a stationary state.
The DLL Lindbladian satisfies KMS detailed balance, so all of its eigenvalues are real and non-positive; in this case, the spectral gap is the magnitude of the second-largest eigenvalue.
For the two auxiliary dynamics, the Lindbladians generally do not satisfy a detailed balance condition, so their eigenvalues may be complex. Since the real parts are still non-positive, we define the spectral gap by
\begin{equation}
    {\rm Gap}(\cL) = -\max_{i:\lambda_i \neq 0} \Re(\lambda_i),
\end{equation}
where $\lambda_i$ are the eigenvalues of $\cL$.

Our numerical experiments lead to two main observations.
First, the undressed auxiliary dynamics exhibit a vanishing spectral gap as $\beta \to 0$. This is consistent with our theoretical analysis at $\beta = 0$, where redundant stationary states arise from the symmetry in the diagonal-Bell sector. As $\beta$ increases, the spectral gaps of the two auxiliary dynamics eventually converge. This suggests that the unitary dressing mechanism is particularly helpful in the low-$\beta$ regime.
Second, and more surprisingly, we find that the DLL Lindbladian and the dressed auxiliary dynamics have almost identical spectral gaps. This suggests that, once the dressed auxiliary dynamics are restricted to a sector where $|\sr\rrangle\llangle\sr|$ is the relevant attractor, the convergence rate toward the purified Gibbs state can be comparable to the convergence rate of the original DLL sampler toward $\sigma$. It remains an open question whether this preserved efficiency is a consequence of our specific choice of unitary dressing or a more universal property of the auxiliary dynamics.

\begin{figure}[!ht]
    \centering
    \includegraphics[width=0.5\linewidth]{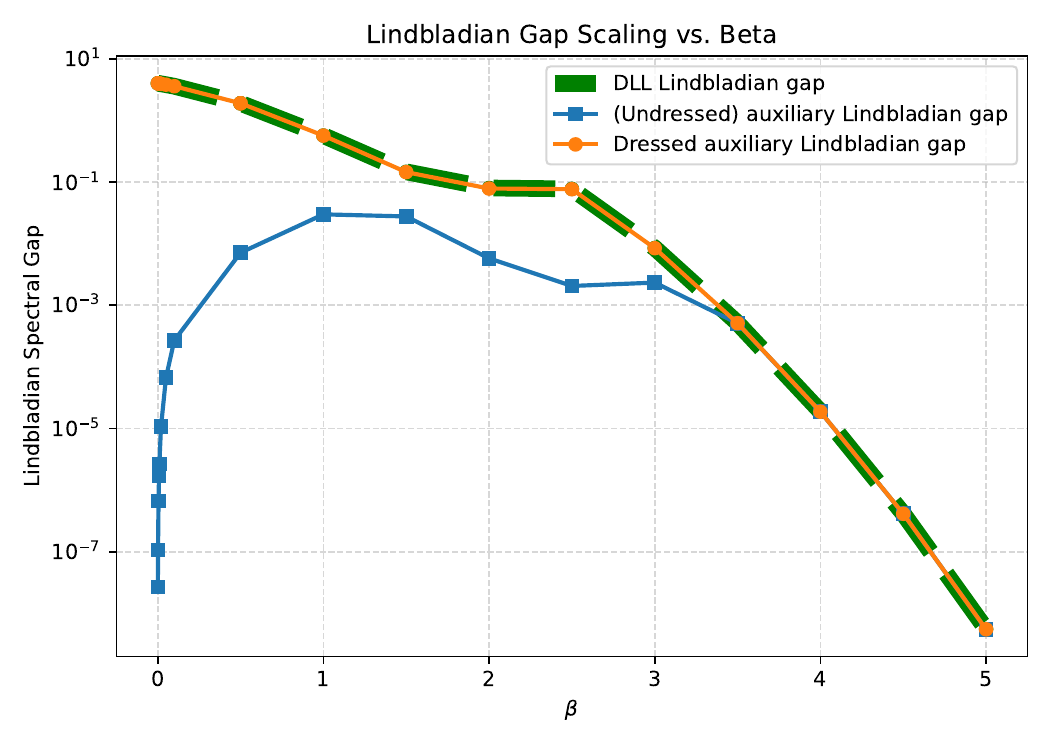}
    \caption{Spectral gaps of three classes of Lindbladian operators: the auxiliary dynamics (without unitary dressing), the auxiliary dynamics with unitary dressing, and the original DLL Gibbs sampler.}
    \label{fig:three_lindbladian_gap}
\end{figure}

\section{Computing observables using purified Gibbs states}
\label{append:computing-observables}

Once the purified Gibbs state is available, estimating thermal observables becomes a standard expectation-value problem on the doubled Hilbert space. For any Hermitian observable $O$,
\begin{equation}
\Tr(\sigma O)=\langle\!\langle \sigma^{1/2}|I\otimes O|\sigma^{1/2}\rangle\!\rangle.
\label{eq:observable-estimation}
\end{equation}
Thus the output of the QSVT routine is already the coherent object needed for physics applications: thermal expectation values are obtained by probing one register of the purification rather than by introducing a separate sampling primitive.

If a block-encoding of $O$ is available, standard amplitude-estimation methods~\cite{brassard2000quantum,rall2021faster,rall2023amplitude} convert \cref{eq:observable-estimation} into an additive-$\epsilon$ estimation procedure. 
The cost is the state-preparation cost $\Or\!\left(\frac{1}{\sqrt{\Delta}}\log\frac{1}{\epsilon}\right)$, with a $1/\epsilon$ multiplicative overhead for expectation estimation.

\end{document}